\crefname{equation}{}{}
\Crefname{equation}{}{}
\theoremstyle{definition} 
\newtheorem{proposition}{Proposition}
\newtheorem{theorem}{Theorem}
\newtheorem{definition}{Definition}
\theoremstyle{plain} 
\theoremstyle{remark} 
\newenvironment{ldescription}[1]
  {\begin{list}{}%
   {\renewcommand\makelabel[1]{##1\hfill}%
   \settowidth\labelwidth{\makelabel{#1}}%
   \setlength\leftmargin{\labelwidth}
   \addtolength\leftmargin{\labelsep}}}
  {\end{list}}
\newcommand{\set}[1]{\mathcal{#1}} 
\newcommand{\RNum}[1]{\uppercase\expandafter{\romannumeral #1\relax}}
\newcommand{\rNum}[1]{\expandafter\romannumeral #1}
\DeclareMathOperator{\Var}{Var}
\newcommand{\pushright}[1]{\ifmeasuring@#1\else\omit\hfill$\displaystyle#1$\fi\ignorespaces}
\newcommand{\pushleft}[1]{\ifmeasuring@#1\else\omit$\displaystyle#1$\hfill\fi\ignorespaces}
\newcolumntype{C}[1]{>{\centering\arraybackslash}p{#1}} 
\begin{document}

\title{Inertia Pricing in Stochastic Electricity Markets}

\author{Zhirui~Liang,~\IEEEmembership{Student Member,~IEEE,}
        Robert~Mieth,~\IEEEmembership{Member,~IEEE,}
        and~Yury~Dvorkin,~\IEEEmembership{Member,~IEEE \vspace{-4mm}}



}


\maketitle

\begin{abstract}
Maintaining the stability of renewable-dominant power systems requires the procurement of virtual inertia services from non-synchronous resources (e.g., batteries, wind turbines) in addition to inertia traditionally provided by synchronous resources (e.g., thermal generators). However, the pricing of inertia provision has not been studied in a stochastic electricity market, where the uncertainty characteristics of renewable energy sources (RES) are considered. To fill in this research gap, this paper formulates a chance-constrained stochastic unit commitment model with inertia requirements and computes equilibrium energy, reserve and inertia prices using convex duality. Numerical experiments on an illustrative system and a modified IEEE 118-bus system show the performance of the proposed pricing mechanism. By allowing new virtual inertia providers to contribute to system inertia requirements, the total operating cost reduces. Moreover, the proposed stochastic electricity market internalizes RES uncertainty, which yields additional cost reductions by co-optimizing energy, reserve and inertia procurement.
\end{abstract}


\section*{Nomenclature}

\newcommand{\longestitem}{$C_{0,o,g}$}

\noindent\textit{Sets:}
\begin{ldescription}{\longestitem}
\item [$\mathcal{I}$] Set of nodes 
\item [$\mathcal{N}_i$] Set of nodes that are connected to node $i$
\item [$\mathcal{T}$] Set of time steps in the planing horizon
\end{ldescription}

\vspace{0.5em}
\noindent\textit{Symbols:}
\begin{ldescription}{\longestitem}
\item [ES$_i$] Energy storage at node $i$
\item [G$_i$] Generator at node $i$
\item [W$_i$] Wind farm at node $i$
\end{ldescription}

\vspace{0.5em}
\noindent\textit{Variables:}
\begin{ldescription}{\longestitem}
\item [$e_{i,t}$] Energy level of ES$_i$ at time $t$ (in MWh)
\item [$u_{gi,t}$] On/off status of G$_i$ at time $t$ ($u_{gi,t}=1$ means G$_i$ is on while $u_{gi,t}=0$ means G$_i$ is off)
\item [$u_{gi,t}^{*}$] Optimal value of $u_{gi,t}$
\item [$C_{ES}$] Expected operating cost of all ES (in \$)
\item [$C_G$] Expected operating cost of all generators (in \$)
\item [$H_{ei,t}$] Virtual inertia constant of ES$_i$ at time $t$ (in s)
\item [$H_{eq}$]  Equivalent inertia constant of a system (in s)
\item [$\bm{H}_{wi,t}$] Random virtual inertia constant of W$_i$ at time $t$ (in s)
\item [${H}_{wi,t}$] Deterministic forecast virtual inertia constant of W$_i$ at time $t$ based on $P_{wi,t}$ (in s)
\item [$\bm{P}_{gi,t}$, $\bm{P}_{di,t}$, $\bm{P}_{ci,t}$] Random active power of G$_i$ output, ES$_i$ discharging, and ES$_i$ charging at time $t$ scheduled based on $\bm{P}_{wi,t}$ (in MW)
\item [${P}_{gi,t}$, ${P}_{di,t}$, ${P}_{ci,t}$] Active power of G$_i$ output, ES$_i$ discharging, and ES$_i$ charging at time $t$ scheduled based on $P_{wi,t}$ (in MW)
\item [$\bm{P}_{wi,t}$] Random output power of W$_i$ at time $t$ (in MW)
\item [$P_{wi,t}$] Deterministic forecast power of W$_i$ at time $t$ (in MW)
\item [$\alpha_{gi,t}, \alpha_{di,t}, \alpha_{ci,t}$] Balancing participation factor of G$_i$, ES$_i$ discharging, and ES$_i$ charging at time $t$
\item [$\theta_{i,t}$] Voltage angle of node $i$ at time $t$ (in rad)
\item [$\bm{\omega}_{ht}$] Vector of random nodal forecast error of $H_{wi,t}$ [$\bm{\omega}_{hi,t}, i \in \set{I}$] (in s)
\item [$\bm{\omega}_{pt}$] Vector of random nodal forecast error of $P_{wi,t}$ [$\bm{\omega}_{pi,t}, i \in \set{I}$] (in MW)
\item [$\Delta P_{gi}$] Inertial response of G$_i$ (in MW)
\item [$\bm{\Omega}_{ht}$] System-wide wind inertia forecast error at time $t$ ($\bm{\Omega}_{ht}=\sum\nolimits_{i \in \set{I}} {\bm{\omega}_{hi,t}}=e^\mathrm{T} \bm{\omega}_{ht}$)
\item [$\bm{\Omega}_{pt}$] System-wide wind power forecast error at time $t$ ($\bm{\Omega}_{pt}=\sum\nolimits_{i \in \set{I}} {\bm{\omega}_{pi,t}}=e^\mathrm{T} \bm{\omega}_{pt}$)
\end{ldescription}

\vspace{0.5em}
\noindent\textit{Parameters:}
\begin{ldescription}{\longestitem}
\item [$c_{0i}$] Constant term in the cost function of G$_i$ (in \$)
\item [$c_{1i}$] First-order cost coefficient of G$_i$ (in \$/MWh)
\item [$c_{2i}$] Second-order cost coefficient of G$_i$ (in \$/MWh$^2$)
\item [$c_{ci}$, $c_{di}$] First-order cost coefficient of ES$_i$ charging and discharging  (in \$/MWh)
\item [$e$] Vector of ones of appropriate dimensions
\item [$f_0$] Reference system frequency (\unit[50 or 60]{Hz})
\item [$f_{\max}^{\prime}$] Maximum admissible rate of change of frequency (RoCoF) (in Hz/s)
\item [$k_i$] Charging and discharging efficiency of ES$_i$
\item [$m_{bi}$]  Whole rotor mass of wind turbine at node $i$ (in kg)
\item [$r_{bi}$] Effective rotor radius of wind turbine at node $i$ (in m)
\item [$t_m$] Time instance of frequency nadir (in s)
\item [$B_{i,j}$] Susceptance of the line between node $i$ and $j$ (p.u.)
\item [$E^k_{bi}$] Kinetic energy stored in the rotating mass of the wind turbine at node $i$ (in kg$\cdot$m$^2$/s)
\item [$E_i^{\min}$, $E_i^{\max}$] Lower and upper limit of $e_{i,t}$ (in MWh)
\item [$H_{bi}$] Virtual inertia constant of wind turbine at node $i$ (in s)
\item [$H_{ei,t}^{\max}$] Upper limit of $H_{ei,t}$ (in s)
\item [$H_{gi}$] Inertia constant of G$_i$ (in s)
\item [$H_{\min}$] Minimum equivalent inertia requirement (in s)
\item [$J_{bi}$] Moment of inertia of wind turbine at node $i$ (in kg$\cdot$m$^2$)
\item [$N_{wi}$] Number of wind turbines at node $i$
\item [$P_{bi}^{\max}$] Rated power of wind turbine at node $i$ (in MW)
\item [$P_{ei}^{\max}$] Rated power of ES$_i$ (in MW)
\item [$P_{gi}^{\max}$] Rated power of G$_i$ (in MW)
\item [$P_{im}^{\max}$] Maximum anticipated power imbalance (in MW)
\item [$P_{sys}$] Total installed generation capacity in a system (in MW)
\item [$S_{i,j}$] Thermal capacity of line between node $i$ and $j$ (p.u.)
\item [$T$] Time constants for all synchronous machines (in s)

\item [$\epsilon_{gi}$, $\epsilon_{di}$, $\epsilon_{ci}$] Probability of power constraint violations for $\{\bm{P}_{gi,t}\}_{t\in \set{T}}$, $\{\bm{P}_{di,t}\}_{t\in \set{T}}$, $\{\bm{P}_{ci,t}\}_{t\in \set{T}}$
\item [$\theta_{ref,t}$] Voltage angle of the reference node at time $t$ (in rad)
\item [$\mu_{hi,t}$, $\mu_{pi,t}$] Mean of the distribution of $\bm{\omega}_{hi,t}$ and $\bm{\omega}_{pi,t}$
\item [$\sigma_{hi,t}^2$, $\sigma_{pt}^2$] Variance of the distribution of $\bm{\omega}_{hi,t}$ and $\bm{\omega}_{pi,t}$
\item [$\phi_{bi}$] Rotor speed of wind turbine at node $i$ (in rad/s)
\item [$\Delta {f_{\max}}$] Maximum admissible frequency deviation at frequency nadir (in Hz)
\item [$\mathrm{M}_{ht}$, $\mathrm{M}_{pt}$] Mean of the distribution of $\bm{\Omega}_{ht}$ and $\bm{\Omega}_{pt}$
\item [$\Sigma_{ht}^2$, $\Sigma_{pt}^2$] Variance of the distribution of $\bm{\Omega}_{ht}$ and $\bm{\Omega}_{pt}$

\end{ldescription}

\IEEEpeerreviewmaketitle

\section{Introduction}
\label{sec:introduction}

The massive deployment of renewable energy sources (RES) is a cornerstone to achieving emission-reduction goals. 
For example, the U.S. Biden-Harris administration recently approved the \unit[800]{MW} Vineyard Wind energy project as a first step towards a total of \unit[30]{GW} new off-shore wind generation \cite{offshore_wind_project}. 
However, increasing injections from stochastic RES amplify uncertainties in power system and electricity market operations \cite{dvorkin2015uncertainty} and require the procurement of sufficient balancing capabilities to ensure reliable electricity delivery.  
Additionally, RES and complementary modern energy resources (e.g., battery systems) are interfaced via power-electronic converters and therefore -- in contrast to traditional synchronous generators -- do not naturally contribute to system \textit{inertia requirements} \cite{datta2020battery}. For example, replacing a conventional \unit[800]{MW} generator whose inertia constant is 5 seconds with a wind farm of the equivalent capacity reduces the system inertia by \unit[4]{GWs}, which is observable even in large systems~\cite{Inertia_and_Power_Grid}. Therefore, power systems with a high penetration of inverter-interfaced RES require new frequency control technologies that replenish the inertial response of retired conventional power plants with spinning synchronous generators for sudden power imbalance. Such technologies are commonly referred to as virtual (or hidden, emulated, synthetic) inertia \cite{fernandez2020review}.

The importance of virtual inertia provision and pricing has been studied in recent literature, e.g, in \cite{poolla2020market,badesa2020pricing,paturet2020economic}. In \cite{poolla2020market}, the authors develop a inertia pricing mechanism in the liberalized energy market based on the Vickrey-Clarke-Groves (VCG) payment rule, in which market participants bid to provide inertia. While \cite{poolla2020market} explicitly allows an economically efficient decentralized inertia allocation, this approach may leave the market operator with an inefficient surplus and does not capture the interdependencies with energy and reserve provision.
Alternatively, \cite{badesa2020pricing} derives a joint marginal pricing scheme for inertia and multi-speed frequency response services (including first-order and second-order frequency regulations) based on a unit commitment model with multiple frequency-security constraints, i.e. controlling the rate of change of frequency (RoCoF), frequency nadir and frequency quasi-steady-state (q-s-s) within acceptable limits. 
Similarly, \cite{paturet2020economic} studies the procurement and pricing of inertia using a frequency-constrained unit commitment formulation while only RoCoF constraint is considered, and it designs and compares three pricing and payment schemes to ensure that all the inertia service providers receive non-negative profits. However, while \cite{poolla2020market,badesa2020pricing,paturet2020economic} internalize inertia services into market-clearing mechanisms, they do not account for the stochastic characteristic of RES injections, which is an indispensable attribute of RES-rich power systems. 

RES stochasticity comprises variability and uncertainty. The variability of RES is the random very-short term fluctuation of RES generation caused by physical processes in the atmosphere, while uncertainty of RES results captures forecast errors \cite{dvorkin2015uncertainty}. 
The uncertainty of RES increases the possibility that the system cannot meet the load requirement in real-time scheduling, which requires suitable balancing regulation and flexible reserve capacity. Moreover, stochastic RES injections may cause high price volatility \cite{ela2017electricity}. 
Hence, different uncertainty modeling techniques for electricity pricing under uncertainty have been developed \cite{aien2016comprehensive}. 

Most previously proposed stochastic electricity market designs rely on \textit{scenario-based} stochastic programming \cite{wong2007pricing,pritchard2010single,morales2012pricing,kazempour2018stochastic}. 
However, besides high computational requirements that limit the number of scenarios that can be considered \cite{dupavcova2003scenario}, the accuracy of the scenario-based method highly depends on how well the chosen scenarios can capture both the range and correlation structures of uncertain parameters. 
Meanwhile, the scenario-based market clearing approaches are usually unable to be welfare-optimal, revenue adequate and cost recovering both in expectation and in each scenario.
Alternatively, \textit{chance constraints}, which rely on computationally tractable risk metrics to internalize RES uncertainty, have been proposed as a promising candidate for practical stochastic electricity markets.
The work in \cite{kuang2018pricing,dvorkin2019chance,mieth2020risk,ratha2019exploring} showed that wholesale electricity market designs can efficiently internalize the uncertainty of renewable generation resources and the reliability requirements of the system operator in the price formation process using chance constraints. However, the approaches in \cite{kuang2018pricing,dvorkin2019chance,mieth2020risk,ratha2019exploring} do not consider inertia services and requirements, which are important for RES-rich power systems because of their influence on commitment and dispatch decisions and, thus, on the resulting prices \cite{paturet2020economic}.

Simultaneous market-clearing procedures for energy and reserve are widely used in the U.S., e.g., by New York Independent System Operator (NYISO) \cite{nyiso2021manual11}, and their advantages of achieving a greater value of social welfare relative to sequential markets have been demonstrated, e.g., in \cite{galiana2005scheduling}. Similarly to energy and reserve, inertia provision is strongly coupled with the other two services and depends on the commitment status of generators. 
Therefore, similar benefits are expected from simultaneously clearing of energy, reserve and inertia, as shown in \cite{doherty2005frequency,davarinejad2017incorporating}. However, \cite{doherty2005frequency,davarinejad2017incorporating} only studied the inertia from conventional generators and deterministic markets, while the virtual inertia and the uncertainty of RES are not considered.
To co-optimize the provision of energy, reserve and inertia, (including virtual inertia) services in renewable-rich power systems, this paper formulates a chance-constrained unit commitment (CC-UC) problem that internalizes (i) RES uncertainty and resulting reserve requirements and (ii) inertia requirements.
The proposed CC-UC moderately modifies established unit commitment problems and can be solved efficiently. We also show that efficient prices (i.e., prices that support a cost-minimizing competitive equilibrium) for all three products can be obtained from the proposed CC-UC, thus leading to an inertia-aware stochastic electricity market design.

The contributions of this work are: 
\begin{itemize}
    \item In contrast with \cite{poolla2020market,badesa2020pricing,paturet2020economic, kuang2018pricing,dvorkin2019chance,mieth2020risk,ratha2019exploring}, this paper co-optimizes the procurement of energy, reserve and inertia providing services in a RES-rich power system and the pricing of these services in a centralized stochastic electricity market. The market design is based on the chance-constrained unit commitment formulation which is recast as a mixed-integer quadratic program (MIQP) as \cite{kuang2018pricing,dvorkin2019chance} assuming normally distributed random variables.
    \item Similar to the current industry practices, this paper adopts the marginal cost-based pricing principle. It also proves rigorously that the resulting dispatch and pricing decisions are efficient and constitute a competitive equilibrium.
    \item Unlike in \cite{kuang2018pricing,dvorkin2019chance,mieth2020risk,ratha2019exploring,doherty2005frequency,davarinejad2017incorporating}, the proposed market design accommodates energy storage systems (ES, e.g., utility scale battery systems) and enables inertia provision from both ES and utility-scale RES (e.g., wind farms).
\end{itemize}

The proposed pricing mechanism is based on current U.S. practice, where reserves are procured simultaneously with energy. For the current market designs in most European countries, which do not immediately permit a simultaneous clearing of energy and various reserve products, our approach can provide decision support for the system operator to co-optimize energy, reserve and inertia requirements and assign precise spatio-temporal values to these requirements, even if they are not traded in a joined market framework.

The proposed market design will be beneficial for various stakeholders. From the perspective of system operators, the total operating cost in RES-rich systems will be reduced by replacing some expensive synchronous inertia providers with cheaper virtual inertia providers. Consequentially, power customers will benefit from lower energy prices. Further, the potential of ES is more fully exploited by allowing them to provide energy, reserve and inertia services, which bring additional revenue to the ES owners. Similarly, wind farms can generate an additional income by providing virtual inertia to the system. The proposed chance-constrained market clearing with explicit reserve and inertia prices, avoids substantial price fluctuations and offers additional revenue sources. Hence, conventional generators can embrace a RES-rich system with less revenue uncertainty.

\section{Preliminaries}
\label{sec:Preliminaries}
In this section, we will introduce some preliminary models, i.e., RES uncertainty models, reserve allocation policy, inertia constants of different resources, and system inertia requirements. We consider a power system with conventional generators, ES and RES, where both ES and RES are equipped to provide virtual inertia. In this paper, ES refers to utility-scale battery energy storage, because it is currently the most common non-synchronous ES technology.
Less common ES systems that could provide inertia services, e.g., supercapacitors or flywheels \cite{fernandez2020review}, can be added to the proposed model. 
Moreover, our RES models focus on large-scale wind farms and their ability to provide virtual inertia, because wind power is the main source of uncertainty in current transmission-level power systems \cite{bienstock2014chance}.
However, the proposed models can be extended to accommodate uncertainty and inertia provision from other RES technologies (e.g. solar photovoltaic), but this may require adapted virtual inertia models. See \cite{dreidy2017inertia}.

\subsection{Uncertainty Model of Wind Power}
\label{subsec:Uncertainty_Model}

We define $\set{I}$ as the set of nodes in the transmission network indexed by $i$, and $\set{T}$ as the set of time steps in the planing horizon indexed by $t$. In this paper we use \textbf{bold} symbols to indicate random variables. Following \cite{dvorkin2019chance} and \cite{mieth2020risk}, uncertain wind power injection $\bm{P}_{wi,t}$ at node $i$ and time $t$ is modeled as:
\begin{equation}
    \bm{P}_{wi,t} = P_{wi,t} - \bm{\omega} _{pi,t}  \label{Pwit_bold},
\end{equation}
where $\bm{P}_{wi,t}$ is a random variable composed of wind power forecast $P_{wi,t}$ and random forecast error $\bm{\omega}_{pi,t}$. 

Following previous works, e.g., in \cite{bienstock2014chance,kuang2018pricing,dvorkin2019chance,mieth2020risk,ratha2019exploring}, we assume that $\bm{\omega}_{pi,t}$ follows a normal forecast error distribution, i.e., $\bm{\omega}_{pi,t} \sim N(\mu_{pi,t},\sigma_{pi,t} ^2)$, where mean ($\mu_{pi,t}$) and variance ($\sigma_{pi,t}$) may vary over time $t$ and node $i$.
While the effectiveness of Normal distributions to model wind power forecast errors has been demonstrated, e.g., in \cite{dvorkin2015uncertainty}, more general distribution assumptions can be adopted \cite{roald2015security,bienstock2014chance,dvorkin2019chance}. In this paper, we assume uncorrelated forecast errors at the individual wind sites, which holds true in the typical system dispatch intervals (\unit[15--60]{min}) for wind farms that are more than \unit[10]{km} apart \cite{bienstock2014chance}. If empirical data does indicate correlations that can not be neglected, an alternative formulation as shown in \cite{mieth2020risk} can be used.
Finally, note that the distributional characteristics of forecast error $\bm{\omega}_{pi,t}$ for a given forecast $P_{wi,t}$ differ from the distribution of absolute wind power injections over time at a given wind farm, which are often modelled through Weibull distributions \cite{bowden1983weibull}.

\subsection{Real-Time Balancing Regulation}

Compensating forecast error $\bm{\omega}_{pi,t}$ in real-time requires procuring balancing reserves to continuously match power supply and demand. 
Specifically, the burden of balancing regulation is distributed among controllable resources, i.e., generators and ES using balancing participation factors $\alpha_{gi,t},\ \alpha_{di,t},\ \alpha_{ci,t} \in[0,1]$, where $c$ and $d$ in subscripts of variables indicate charging and discharging states of ES, respectively.
These participation factors are modeled as decision variables and capture the relative amount of system-wide forecast error $\bm{\Omega}_{pt}$ that a resource at node $i$ and time $t$ must balance. 
We define the system-wide wind power forecast error $\bm{\Omega}_{pt}=\sum\nolimits_{i \in \set{I}} {\bm{\omega}_{pi,t}} = e^\mathrm{T} \bm{\omega}_{pt}$, where $\bm{\omega}_{pt}$ is a column vector collecting all nodal forecast errors at time $t$, and $e$ is a column vector of ones of appropriate dimensions. Assume that $\{{\bm{\omega}_{pi,t}}\}_{\forall{i}\in\set{I}}$ are jointly normally distributed random variables, then $\bm{\Omega}_{pt}$ also follows a normal distribution, i.e., $\bm{\Omega}_{pt} \sim N(\mathrm{M}_{pt},\Sigma_{pt}^2)$, where $\mathrm{M}_{pt} = \mathbb{E}[\bm{\Omega}_{pt}] = \sum\nolimits_{i \in \set{I}} {\mu_{pi,t}}$, and $\Sigma_{pt}^2 = \Var[\bm{\Omega}_{pt}] = e^\mathrm{T} \mathrm{Cov}[\bm{\omega}_{pt}] e$. (If all $\{{\bm{\omega}_{pi,t}}\}_{\forall{i}\in\set{I}}$ are independent random variables, then $\Sigma_{pt}^2$ can be simplified as $\sum\nolimits_{i \in \set{I}} {\sigma_{pi,t}^2}$.)
Therefore, the real-time active power output of each generator (${\bm{P}_{gi,t}}$) can be modeled as:
\begin{equation}
    \bm{P}_{gi,t} = P_{gi,t} + \alpha _{gi,t} \bm{\Omega}_{pt}. \label{Pgit_bold}
\end{equation}

Note that balancing participation factors $\alpha_{gi,t}$ establish a linear relationship between system imbalance $\Omega_{p,t}$ and the resulting generator response. 
This affine balancing policy resembles droop control strategies employed in both primary and secondary frequency control governed by automatic generator control (AGC) systems \cite{chertkov2017chance, bienstock2014chance,lubin2019chance}. Note that the non-affine control policy proposed in \cite{roald2015optimal} is also implementable within our framework.

Similarly, the real-time discharging and charging power of each ES (${\bm{P}_{di,t}}$ and ${\bm{P}_{ci,t}}$) can be modeled as: 
\begin{align}
    \bm{P}_{di,t} &= P_{di,t} + \alpha _{di,t} \bm{\Omega}_{pt} \label{Pdit_bold}\\
    \bm{P}_{ci,t} &= P_{ci,t} + \alpha _{ci,t} \bm{\Omega}_{pt}. \label{Pcit_bold}
\end{align}
where $\alpha _{gi,t} \bm{\Omega}_{pt}$, $\alpha _{di,t} \bm{\Omega}_{pt}$ and $\alpha _{ci,t} \bm{\Omega}_{pt}$ are the real-time balancing power provided respectively by generator, ES discharging and ES charging at node $i$ and time $t$. Thus, the system is balanced if $\sum\nolimits_{i \in \set{I}} {\left( {{\alpha _{gi,t}} + {\alpha _{di,t}} - {\alpha _{ci,t}}} \right)} = 1 $. If ES are not eligible to participate in balancing reserve provisions, then ${\alpha _{di,t}} = {\alpha _{ci,t}} = 0$. 

\subsection{Equivalent System Inertia with Synchronous and Non-synchronous Providers}
\label{subsec:Equivalent_Inertia}

Traditionally, power system inertia refers to stored kinetic energy in the rotating mass of synchronous generators \cite{kundur2007power}. Each generator is characterized by its inertia constant $H_{gi}$ (in s), and its inertial response $\Delta P_{gi}$ (in MW) is captured as:
\begin{equation}
    \Delta P_{gi} = -\frac{2H_{gi}P_{gi}^{\max}}{f_0} \frac{df}{dt}   \label{Delta P},
\end{equation}
where $f$ is the real-time system frequency (in Hz), ${f_0}$ is the reference system frequency (\unit[50 or 60]{Hz}) and ${P_{gi}^{\max}}$ (in MW) is the rated power of generator at node $i$.

On the other hand, \textit{virtual} inertia is an instant injection or withdrawal of electrical power as a response to frequency deviations from resources that do not naturally vary their power output as a function of system frequency. Instead, these resources require specific control policies that mimic the inertial response of rotating synchronous generators, e.g. a virtual synchronous machine (VSM) algorithm \cite{markovic2018lqr}. Thus, the inertia constant of ES at node $i$ ($H_{ei}$, in s), which is a parameter in the VSM algorithm, can be set based on ES technical limits. That is, the inertial response of ES at node $i$ can be computed as $H_{ei} P_{ei}^{\max}$, where $H_{ei}$ is a decision variable and $P_{ei}^{\max}$ is a parameter denoting the maximum discharging power of ES. 

In turn, wind turbines typically emulate inertia by utilizing the kinetic energy of blade rotation \cite{dreidy2017inertia}. 
Thus, inertia constant $H_{bi}$ (in s) of a wind turbine at node $i$ can be modeled as:
\begin{equation}
    H_{bi} = {E^k_{bi}}/{P_{bi}^{\max}} = ({J_{bi} \phi_{bi}^2})/ ({2P_{bi}^{\max}}),
\end{equation}
where $E^k_{bi}$ is the kinetic energy stored in the rotating mass of the wind turbine at node $i$ (in kg$\cdot$m$^2$/s), $P_{bi}^{\max}$ is its rated power (in MW), $\phi_{bi}$ is its rotor speed (in rad/s) and $J_{bi}$ is its moment of inertia (in kg$\cdot$m$^2$) given by: $J_{bi} = m_{bi} r_{bi}^2/9$,
where $m_{bi}$ is the mass of the whole rotor including the three blades (in kg) and $r_{bi}$ is the effective rotor radius (in m) \cite{morren2006inertial}. 
Assuming that all turbines comprising a wind farm are identical and neglecting wind speed differences within a wind farm, all turbines can provide the same amount of virtual inertia. Thus, if there are $N_{wi}$ wind turbines at node $i$, then the inertia constant of the wind farm at node $i$ is $H_{wi} =(N_{wi}{E^k_{bi}})/ (N_{wi}{P_{bi}^{\max}})=H_{bi}$.

Since the rotor speed $\phi_{bi}$ of wind turbines at each node depends on the random wind speed, $\bm{H}_{wi}$ is also a random and time-varying variable. We assume that forecast error $\bm{\omega}_{hi,t}$ of $\bm{H}_{wi,t}$ is also normally distributed, i.e., $\bm{H}_{wi,t} = H_{wi,t} - \bm{\omega}_{hi,t}$ and $\bm{\omega}_{hi,t}\sim N(\mu_{hi,t},\sigma_{hi,t} ^2)$. 
Similar to $\bm{\Omega}_{pt}$, we define the system-wide wind inertia forecast error $\bm{\Omega}_{ht}=\sum\nolimits_{i \in \set{I}} {\bm{\omega}_{hi,t}}$, and $\bm{\Omega}_{ht}$ also follows a normal distribution, i.e., $\bm{\Omega}_{ht} \sim N(\mathrm{M}_{ht},\Sigma_{ht}^2)$, where $\mathrm{M}_{ht} = \mathbb{E}[\bm{\Omega}_{ht}] = \sum\nolimits_{i \in \set{I}} {\mu_{hi,t}}$, and $\Sigma_{ht}^2 = \Var[\bm{\Omega}_{ht}] = e^\mathrm{T} \mathrm{Cov}[\bm{\omega}_{pt}] e$.
Note that wind power and wind inertia forecast error, $\bm{\omega}_{hi,t}$ and $\bm{\omega}_{pi,t}$, both depend on wind speed. However, $\bm{\omega}_{hi,t}$ and $\bm{\omega}_{pi,t}$ are never part of the same chance constraint in the proposed model and the reformulation in Section~\ref{subsec:Deterministic_Equivalent} only requires each random variable to be captured by a normal distribution and does not make assumptions on dependency.

Since $H_{ei}$ and $\bm{H}_{wi}$ have the same unit (s) as inertia constant $H_{gi}$ of traditional generators, they can be combined directly based on the rated output power of each resource ($P_{gi}^{\max}$, $P_{ei}^{\max}$ and $P_{wi}^{\max}$, in MW). Thus, the equivalent inertia $H_{eq}$ (in s) of a system relying on both traditional and virtual inertia is:
\begin{equation}
    H_{eq,t} {\rm{=}} {\sum\limits_{i \in \set{I}} \mathbb{E}_{\bm{\omega}_{hi,t}}} [H_{gi} P_{gi}^{\max} {\rm{+}} H_{ei} P_{ei}^{\max} {\rm{+}} \bm{H}_{wi,t} P_{wi}^{\max} )]/{P_{sys}},  \label{Heq}
\end{equation}
where the total installed generation capacity in the system (in MW) is given by:
\begin{equation}
    P_{sys}=\sum\nolimits_{i \in \set{I}}  (P_{gi}^{\max} + P_{ei}^{\max} + P_{wi}^{\max}).\label{Psys}
\end{equation}

\subsection{System Inertia Requirements}
\label{subsec:Inertia_Requirements}

\begin{figure}
    \centering
    \includegraphics[width=0.95\linewidth]{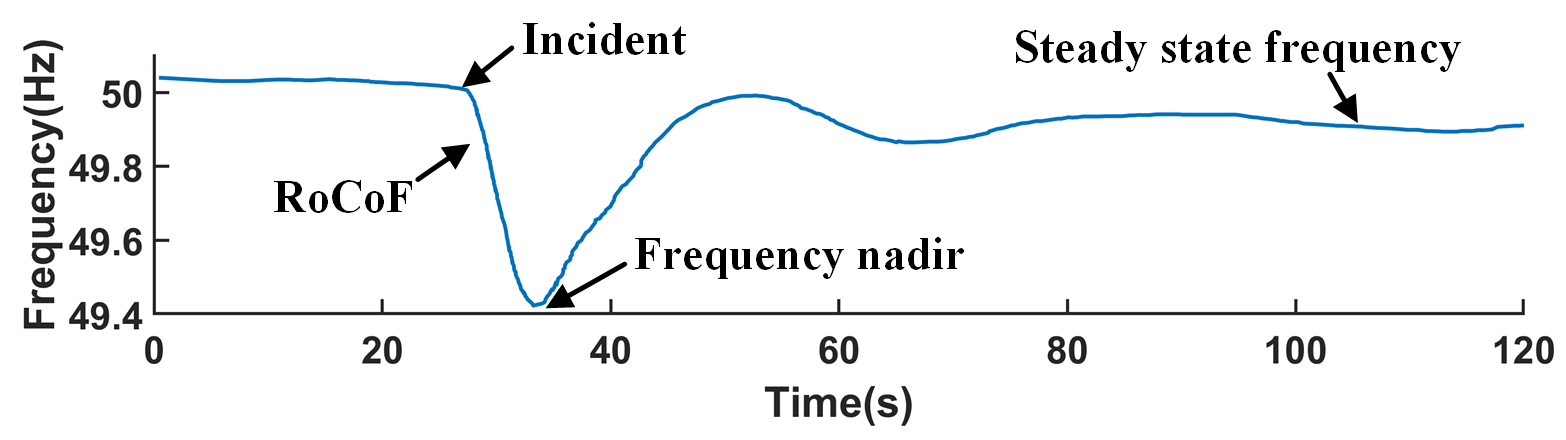}
    \caption{System frequency excursions caused by a sudden generation deficit.}
    \label{fig:RoCoF definition}
\end{figure}

Fig.~\ref{fig:RoCoF definition} shows the system frequency after a sudden generation deficit (e.g., an unplanned generator outage or load spike). 
The resulting frequency excursion can be characterized by the rate of change of frequency (RoCoF), the frequency nadir (the minimum instantaneous frequency) and the steady-state frequency. The greatest RoCoF occurs at the outset of the excursion, when the power imbalance is maximal.
To ensure power system stability, the inertial response must be such that the maximum RoCoF, frequency nadir and steady state frequency are kept within predefined limits. Thus, by varying the equivalent system inertia $H_{eq}$ in \cref{Heq}, one must ensure the provision of the minimum system inertia requirements ($H_{\min}$). 
Note that the maximum RoCoF only depends on $H_{eq}$ and the steady-state frequency deviation is determined only by the damping and droop gain in the affine control process, while the frequency nadir depends on all aforementioned variables \cite{paturet2020economic}. Thus, $H_{\min}$ is a function of the maximum allowable RoCoF and frequency nadir, and it should be equal to the greater value of the inertia constant requirements determined by the RoCoF limit and the frequency nadir limit. 
As reported in \cite{paturet2020economic}, $H_{\min}$ can be computed as:
\begin{align}
    H_{\min} &= \max \bigg\{ ( \overbrace{\vphantom{\bigg\{}\left| P_{im}^{\max} \right| f_0) / (2{f_{\max}^{\prime}} P_{sys})}^{\text{Determined by RoCoF limit}}, \nonumber\\ 
    & \underbrace{T (R_g - F_g){\left( {\frac{\left| P_{im}^{\max} \right|{e^{-\varsigma \omega_n t_m}}} {\Delta f_{\max}(D + R_g)+ \left| P_{im}^{\max} \right|} }\right)^2}}_{\text{Determined by frequency nadir limit}}\bigg\}, \label{Hmin}
\end{align}
where ${\left| {P_{im}^{\max}} \right|}$ is the absolute value of the maximum anticipated power imbalance in the system (e.g. following a credible contingency), $f_{\max}^{\prime}$ is the maximum admissible RoCoF, $\Delta {f_{\max}}$ is the maximum admissible frequency deviation at nadir, $T$ is the time constant of generators (assume the equal time constant for all synchronous generators),  $t_m$ is the time instance of frequency nadir,  $R_g$, $F_g$, $D$, $\varsigma$, and $\omega_n$ are defined as the droop gain, the fraction of power generated by synchronous generators, the damping constant, the damping ratio, and the natural frequency, respectively, and the specific calculation method of these values can be found in \cite{markovic2018lqr}.
The first and the second terms in \cref{Hmin} are determined by the RoCoF and nadir limits, respectively. However, we note similarly to \cite{paturet2020economic} that RoCoF limit violations are more common and greater than nadir limit violations. Therefore, we focus on the inertia requirements determined by the RoCoF limit for calculating $H_{\min}$.

Note that the same requirements must hold for an event with a sudden generation surplus (e.g., unplanned outage of a large industrial load), where the frequency deviation must be contained by an upper limit (frequency zenith).

\section{Chance-Constrained Unit Commitment Model}
\label{sec:models}

In RES-rich power systems, the commitment decisions are affected by balancing reserve and inertia requirements.
Therefore, we now derive a suitable uncertainty- and inertia-aware unit commitment (UC) model, using the RES uncertainty models, balancing regulation policy and inertia requirements shown in Section~\ref{sec:Preliminaries}.
As discussed in Section~\ref{sec:introduction} above, we use chance constraints to internalize the uncertain RES, generator and ES injections, as given by \cref{Pgit_bold,Pdit_bold,Pwit_bold,Pcit_bold}. 
First, we formulate a benchmark model (Section~\ref{subsec:Benchmark_Model}) based on the existing work in \cite{dvorkin2019chance}, which reflects the current state-of-the-art in chance-constrained unit commitment (CC-UC) and market clearing, but which we extended to account for inertia requirements. 
Then, we propose a modification of this model in Section~\ref{subsec:Proposed_Model} to account for virtual inertia provision from ES and wind farms and ES reserve participation.
Finally, we recast the proposed model as a deterministic, computationally tractable convex quadratic program to facilitate price analysis in Section-\ref{sec:prices}.

\subsection{Benchmark Model}
\label{subsec:Benchmark_Model}

The CC-UC model in \cref{model_1} motivated by \cite{dvorkin2019chance} will be used as the benchmark as it considers generators providing energy, reserve and inertia to the system, while ES and RES can only provide energy, but not reserve or virtual inertia. 
\allowdisplaybreaks
\begin{subequations}
\begin{align}
&\min_{\substack{ \{P_{gi,t},\alpha _{gi,t}  P_{di,t}, P_{ci,t},\\ u_{gi,t}  \}_{t \in \set{T}, i \in \set{I}} }}
    \sum\limits_{t \in \set{T}} \sum\limits_{i \in \set{I}} \Big\{ \mathbb{E}_{\bm{\omega}_{pi,t}} [c_{gi}(\bm{P}_{gi,t}, u_{gi,t})] \nonumber\\
    & \hspace{3.5cm} + c_{di}(P_{di,t}) + c_{ci} (P_{ci,t})\Big\}   \label{model_1_a}\\
    &\text{s.t. }  \forall t\in\set{T}: \nonumber \\
    & (\tau_{gi,t}^{+}): \mathbb{P}_{\bm{\omega}_{pi,t}} \!\! \left[ {\bm{P}_{gi,t} \le u_{gi,t} P_{gi}^{\max }} \right] \ge 1 - \epsilon_{gi} \ \ \forall{i}\in\set{I} \label{model_1_b}\\
    & (\tau_{gi,t}^{-}): \mathbb{P}_{\bm{\omega}_{pi,t}} \!\! \left[ { \bm{P}_{gi,t} \ge u_{gi,t} P_{gi} ^{\min}} \right] \ge 1 - \epsilon_{gi} \ \ \forall{i}\in\set{I} \label{model_1_c}\\
    & (\xi_{i,t}^{-} ,\xi_{i,t}^{+}): 0 \le P_{di,t} \le P_{ei}^{\max} \quad \forall{i}\in\set{I} \label{model_1_d}\\
    & (\nu_{i,t}^{-} ,\nu_{i,t}^{+}): 0 \le P_{ci,t} \le P_{ci}^{\max} \quad \forall{i}\in\set{I} \label{model_1_e}\\
    & (\beta _{i,t}^{-} ,\beta _{i,t}^{+} ): E_i^{\min } \le e_{i,t} \le E_i^{\max } \quad \forall{i}\in\set{I} \label{model_1_f}\\
    & (\eta _{i,t}): e_{i,t} = e_{i,t-1} - P_{di,t} / k_i + P_{ci,t} k_i \quad \forall{i}\in\set{I} \label{model_1_g}\\
    & (\rho_{gi,t}^{-} ,\rho_{gi,t}^{+} ): 0 \le \alpha _{gi,t} \le u_{gi,t} \quad \forall{i}\in\set{I} \label{model_1_h}\\
    & (\lambda_t): \sum\nolimits_{i \in \set{I}} {\left( {P_{gi,t} + P_{di,t} - P_{ci,t} + P_{wi,t} -d_{i,t} } \right)}  {\rm{=}}  0   \label{model_1_i}\\
    & (\gamma_t): \sum\nolimits_{i \in \set{I}} {\alpha _{gi,t}}  = 1  \label{model_1_j}\\
    & (\chi _t):\sum\nolimits_{i \in \set{I}} {u_{gi,t}   H_{gi} P_{gi}^{\max} /P_{sys}}  \ge H_{\min} \label{model_1_k}\\
    & u_{gi,t} \in \{ 0,1\} \quad \forall{i}\in\set{I}. \label{model_1_l}
\end{align}%
\label{model_1}%
\end{subequations}%
\allowdisplaybreaks[0]%
Objective~\cref{model_1_a} minimizes the expected operating cost given the cost functions and power outputs of all resources, where $c_{gi}(\cdot)$, $c_{di}(\cdot)$, and $c_{ci}(\cdot)$ are respectively the cost function of G$_i$, ES$_i$ discharging and ES$_i$ charging. 
Binary variable $u_{gi,t}$ in \cref{model_1_l} defines the commitment status of generator at node $i$ and time $t$, i.e., $u_{gi,t}=1$ means the generator is on while $u_{gi,t}=0$ means the generator is off. If a generator is off, i.e., not committed and not synchronized with the grid, it cannot provide power, reserve or inertia. Chance constraints \cref{model_1_b,model_1_c} ensure that the uncertain real-time generator outputs $\bm{P}_{gi,t}$ are contained within their lower and upper limits ($P_{gi}^{\min}$ and $P_{gi}^{\max}$) with a probability of at least $1-\epsilon_{gi}$. Risk level $\epsilon_{gi}>0$ is chosen as a small number and captures the tolerance to constraint violations at this generator. As a result, chance constraints \cref{model_1_b,model_1_c} limit the expected generator output $P_{gi,t}$ and the uncertain real-time balancing contribution $\alpha_{gi,t}\Omega_{p,t}$ simultaneously and thus capture the trade-off between energy and reserve provision.
We opt to not include an additional system-wide constraint on reserve sufficiency as shown in \cite{lubin2019chance} to remain consistent with previous works on chance-constrained market clearing.
Eqs.~\cref{model_1_d}-\cref{model_1_g} model ES operations, i.e., \cref{model_1_d}-\cref{model_1_e} limit power outputs and \cref{model_1_f}-\cref{model_1_g} are energy constraints.
Eq.~\cref{model_1_f} ensures that energy level $e_{it}$ of the ES at node $i$ and time $t$ remains within its technical limits ($E_i^{\min}$ and $ E_i^{\max}$). Eq.~\cref{model_1_g} updates $e_{it}$ based on the charging or discharging power of the ES at node $i$ and time $t$, where $k_i\in[0,1]$ captures the charging and discharging efficiency.
Eq.~\cref{model_1_h} constrains participation factors ${\alpha_{gi,t}} \in [0,1]$, if the generator at node $i$ is committed at time $t$, i.e., ${u_{gi,t}}=1$. The system-wide power balance is enforced in \cref{model_1_i}, ensuring that the sum of generation, ES charging and discharging, and the forecast wind production is equal to system demand $d_{i,t}$ at all times.
Eq. \cref{model_1_j} enforces that the total balancing participation of generators ($\sum\nolimits_{i \in \set{I}} \alpha_{gi,t}\bm{\Omega}_{pt}$) is sufficient to compensate the total wind power forecast error ($\bm{\Omega}_{pt}$). Eq.~\cref{model_1_k} ensures that the inertia provided by committed synchronous generators meets inertia requirement $H_{\min}$, where $u_{gi,t} H_{gi}$ is the inertia provided by the generator at node $i$ and time $t$.
Note that only committed generators can provide inertia to the system. Greek letters in parentheses in \cref{model_1_b}-\cref{model_1_k} denote dual multipliers of the respective constraints.

\subsection{Proposed Model}
\label{subsec:Proposed_Model}

We extend \cref{model_1} to enable ES and RES to provide reserve and/or inertia services as detailed in Sections~\ref{subsec:Uncertainty_Model} and~\ref{subsec:Equivalent_Inertia}, which renders their real-time output uncertain: 
\allowdisplaybreaks
\begin{subequations}
\begin{align}
&\min_{\substack{ \{P_{gi,t}, \alpha _{gi,t} P_{di,t}, P_{ci,t}, \alpha _{di,t}, \\ \alpha _{ci,t}, u_{gi,t}, H_{ei,t} \}_{t \in \set{T}, i \in \set{I}} }}
    \sum\limits_{t \in \set{T}} \sum\limits_{i \in \set{I}} \Big\{ \mathbb{E}_{\bm{\omega}_{pi,t}} [c_{gi} (\bm{P}_{gi,t}, u_{gi,t})  \nonumber \\
    & \hspace{3.5cm} + c_{di}(\bm{P}_{di,t}) + c_{ci}(\bm{P}_{ci,t})]\Big\}   \label{model_2_a}\\
    &\text{s.t. } \forall t\in\set{T}: \cref{model_1_b}-\cref{model_1_c}, \cref{model_1_h}-\cref{model_1_i}, \cref{model_1_l}\nonumber \\
    & (\xi_{i,t}^{+}): \mathbb{P}_{\bm{\omega}_{pi,t}} \left[\bm{P}_{di,t}+ 2H_{ei,t} {f_{\max}^{\prime}} P_{ei}^{\max}/f_0 \le P_{ei}^{\max}  \right]  \nonumber\\
    & \hspace{1cm} \ge 1 - \epsilon_{di} \ \ \forall{i}\in\set{I} \label{model_2_d}\\
    & (\nu_{i,t}^{+}): \mathbb{P}_{\bm{\omega}_{pi,t}} \left[\bm{P}_{ci,t}+ 2H_{ei,t}  {f_{\max }^{\prime}} P_{ei}^{\max}/f_0 \le P_{ci}^{\max}  \right]  \nonumber\\
    & \hspace{1cm} \ge 1 - \epsilon_{ci} \ \ \forall{i}\in\set{I} \label{model_2_e}\\
    & (\eta_{i,t}):e_{i,t} =  e_{i,t-1} + \mathbb{E}_{\bm{\omega}_{pi,t}} \left[\bm{P}_{ci,t} k_i -\bm{P}_{di,t}/k_i \right] \ \forall{i}\in\set{I} \label{model_2_j}\\
    & (\beta_{i,t}^{+}):e_{i,t} \le E_i^{\max} - 2H_{ei,t} \Delta f_{\max} P_{ei}^{\max}/f_0 \ \ \forall{i}\in\set{I} \label{model_2_h}\\
    & (\beta_{i,t}^{-} ):{e_{i,t}} \ge E_i^{\min } + 2H_{ei,t} \Delta f_{\max} P_{ei}^{\max}/f_0 \ \ \forall{i}\in\set{I} \label{model_2_i}\\
    & (\xi_{i,t}^{-}): -P_{di,t} \le 0 \quad \forall{i}\in\set{I} \label{model_2_f}\\
    & (\nu_{i,t}^{-}): -P_{ci,t} \le 0 \quad \forall{i}\in\set{I} \label{model_2_g}\\
    & (\varepsilon_{i,t}):H_{ei,t} \le H_{ei}^{\max} \quad \forall{i}\in\set{I} \label{model_2_k} \\
    & (\rho_{di,t}^{-} ,\rho_{di,t}^{+} ):0 \le \alpha _{di,t} \le 1 \quad \forall{i}\in\set{I} \label{model_2_m} \\
    & (\rho_{ci,t}^{-} ,\rho_{ci,t}^{+} ):0 \le \alpha_{ci,t} \le 1 \quad \forall{i}\in\set{I} \label{model_2_n} \\
    & (\gamma_t):\sum\nolimits_{i \in \set{I}} {\left( {\alpha_{gi,t} + \alpha_{di,t} - \alpha_{ci,t}} \right)}  = 1  \label{model_2_p}\\
    & (\chi_t):\mathbb{P}_{\bm{\omega}_{hi,t}} \!\Big\{ \sum\limits_{i \in \set{I}} {\left( {u_{gi,t} H_{gi} P_{gi}^{\max} {\rm{+}} H_{ei,t} P_{ei}^{\max}{\rm{+}} \bm{H}_{wi,t} P_{wi}^{\max}} \right)  }  \nonumber\\
    & \hspace{0.8cm}  \ge {P_{sys} H_{\min}} \Big\}  \ge 1 - {\epsilon_{hi}}. \label{model_2_q}
\end{align}%
\label{model_2}%
\end{subequations}%
\allowdisplaybreaks[0]%
Compared with \cref{model_1_a}, objective \cref{model_2_a} minimizes the expected operating cost based on the same cost functions, but considers uncertain ES charging and discharging ($\bm{P}_{ci,t}$ and $\bm{P}_{di,t}$). Eqs.~\cref{model_2_d}-\cref{model_2_e} and \cref{model_2_j}-\cref{model_2_i} are the modified ES power and energy constraints. 
Note that we use chance constraints in \cref{model_2_d}-\cref{model_2_e} because the ES charging power limits are selected conservatively and do not reflect actual physical ES limits on charging and, therefore can be violated for a short period. In contrast, energy constraints in \cref{model_2_h}-\cref{model_2_i} reflect the physical storage capacity of ES, which cannot be violated \cite{bienstock2017robust}, and are, therefore, deterministic with a sufficient safety margins given by  $E_i^{\min}$ and $E_i^{\max}$. 
Note that in \cref{model_2_j} the energy level of ES is calculated based on the expected charging and discharging power of ES, which represents the most likely scenario in real-time scheduling considering the normal distribution of forecast error.
Eq.~\cref{model_2_m}-\cref{model_2_n} constrains ${\alpha_{di,t}}$ and ${\alpha_{ci,t}}$, while Eq.~\cref{model_2_p} ensures the reserve sufficiency. Additionally, ES power and energy constraints in \cref{model_2} account for the provision of virtual inertia. In \cref{model_2_d}, \cref{model_2_e}, \cref{model_2_h}, and \cref{model_2_i}, $H_{ei,t}$ is the inertia constant of the ES at node $i$ and time $t$, $\Delta {f_{\max}}$ is the maximum admissible frequency deviation at nadir, while ${f_{\max}^{'}}$ and $f_{0}$ are introduced in Section~\ref{subsec:Inertia_Requirements}.
Thus, $2{H_{ei,t}}  {f_{\max}^{'}} {P_{ei}^{\max}}/{f_0}$ and $2{H_{ei,t}} \Delta {f_{\max}} {P_{ei}^{\max}}/{f_0}$ are, respectively, the inertial power and energy response to the worst-case power imbalance defined by the system operator \cite{paturet2020economic,markovic2018lqr}. 
Eq.~\cref{model_2_k} limits the inertia constant of ES to $H_{ei}^{\max}$, and \cref{model_2_q} ensures the sufficiency of inertia provided by generators, ES and RES.  Model \cref{model_1} and \cref{model_2} can be modified to capture additional constraints, e.g., ramping constraints, which we have omitted due to their minimal impact on the price analysis below.

\subsection{Deterministic Equivalent of the CC-UC Model}
\label{subsec:Deterministic_Equivalent}

The CC-UC model in \cref{model_2} contains expectation ($\mathbb{E}$) and probability ($\mathbb{P}$) operators, which are reformulated into computationally tractable forms as follows.

\subsubsection{Expected Generation Cost}
The production costs of each controllable generator $c_{gi}(\cdot)$ and ES discharging and charging costs $c_{di}(\cdot)$, $c_{ci}(\cdot)$ are given as:
\begin{align}
    c_{gi} (\bm{P}_{gi,t}, u_{gi,t}) &= u_{gi,t} c_{0i} +c_{1i}\bm{P}_{gi,t} + c_{2i} (\bm{P}_{gi,t})^2   \label{cgi}\\
    c_{di} (\bm{P}_{di,t}) &= c_{di} \bm{P}_{di,t}  \label{cdi}\\ 
    c_{ci} (\bm{P}_{ci,t}) &= c_{ci} \bm{P}_{ci,t},  \label{cci}
\end{align}
where $c_{0i}$, $c_{1i}$, $c_{2i}$, $c_{ci}$ and $c_{di}$ are cost coefficients. Note that instead of modeling ES cost in \cref{cdi,cci} with a more complex cost function which captures charging and discharging cycles and degradation effects \cite{xu2016modeling}, we approximate this relationship via linear cost factor  $c_{ci}$ and $c_{di}$. 
Thus, using \cref{cgi,cdi,cci} and recalling that $\mathbb{E}[\bm{x}^2]=\mathbb{E}[\bm{x}]^2 + \Var[\bm{x}]$ for any random $\bm{x}$, 
the expected cost of individual generator and ES at time $t$ are given as:
\begin{align}
    C_{gi,t} & = \mathbb{E}_{\bm{\omega}_{pi,t}} \left[c_{gi} \left( {\bm{P}_{gi,t}, u_{gi,t} } \right) \right] \nonumber \\
    & =c_{0i} u_{gi,t}  +c_{1i}\left( {P_{gi,t}+\mathrm{M}_{pt} \alpha _{gi,t}} \right) \nonumber \\  
    & + c_{2i} \left[ (P_{gi,t} + \mathrm{M}_{pt} \alpha _{gi,t})^2 + \Sigma_{pt}^2 \alpha _{gi,t}^2 \right]  \label{Cgi} \\
    C_{ei,t }& = \mathbb{E}_{\bm{\omega}_{pi,t}} \left[c_{di} \left(\bm{P}_{di,t} \right) +c_{ci} \left(\bm{P}_{ci,t} \right) \right] \nonumber \\
    & = {c_{di}\left(P_{di,t} + \mathrm{M}_{pt} \alpha _{di,t} \right) + c_{ci}\left( {P_{ci,t} + \mathrm{M}_{pt} \alpha _{ci,t}} \right)}.   \label{Cei}
\end{align}
Therefore, the expected cost of all generators is $C_G = \sum\nolimits_{t \in \set{T}} \sum\nolimits_{i \in \set{I}}  C_{gi,t}$, and the expected cost of all ES is $C_{ES} =  \sum\nolimits_{t \in \set{T}} \sum\nolimits_{i \in \set{I}} C_{ei,t}$.

\subsubsection{Chance Constraints}

To deal with chance constraints in \cref{model_2}, we introduce the following notations: $\hat \delta_{gi} = \Phi^{-1} (1 - \epsilon_{gi}) \Sigma_{pt} - \mathrm{M}_{pt}$, $\hat \delta_{di} = \Phi^{-1} (1 - \epsilon_{di}) \Sigma_{pt} - \mathrm{M}_{pt}$, $\hat \delta_{ci} = \Phi^{-1} (1 - \epsilon_{ci}) \Sigma_{pt} - \mathrm{M}_{pt}$, $\hat \delta_{hi} = \Phi ^{-1} (1 - \epsilon_{hi}) \Sigma_{ht} -\mathrm{M}_{ht}$, where $\Phi^{-1}(\cdot)$ is the inverse cumulative distribution function of the standard normal distribution.
Following the reformulation presented in, e.g., \cite{bienstock2014chance,kuang2018pricing,dvorkin2019chance}, we use $\hat \delta_{gi}$, $\hat \delta_{di}$, $\hat \delta _{ci}$ and $\hat \delta_{hi}$ to reformulate 
\cref{model_1_b}-\cref{model_1_c}, \cref{model_2_d}-\cref{model_2_e} and \cref{model_2_q} into \cref{model_determined_b}-\cref{model_determined_g}.
Furthermore, if $\bm{\omega}_{pi,t}$ or $\bm{\omega}_{hi,t}$ follows other distributions, e.g., Student’s t-distribution, the aforementioned auxiliary variables will be formed differently (see \cite{roald2015security}), but the reformulations of the chance constraints in \cref{model_2} will remain the same form as in \cref{model_determined}.

\subsubsection{Deterministic CC-UC Equivalent}

The reformulated objective function and chance constraints
lead to the following deterministic equivalent of \cref{model_2}:
\allowdisplaybreaks
\begin{subequations}
\begin{align}
    &\min_{\substack{\{P_{gi,t}, P_{di,t}, P_{ci,t},u_{gi,t},\alpha _{gi,t}\\  \alpha _{di,t},\alpha _{ci,t},H_{ei,t}\}_{t \in \set{T}, i \in \set{I}} }}\  C_G+C_{ES} \label{model_determined_a}\\
    &\text{s.t. }\forall t\in\set{T}, \forall i\in\set{I}: 
    \text{\cref{model_1_h}-\cref{model_1_i}, \cref{model_1_l}, \cref{model_2_h}-\cref{model_2_p}}\nonumber \\
    & (\mu_{gi,t}^ {+} ): P_{gi,t} \le u_{gi,t} P_{gi}^{\max } - {{\hat \delta }_{gi}} \alpha _{gi,t} \label{model_determined_b}\\
    & (\mu_{gi,t}^ {-} ): - P_{gi,t} \le  - u_{gi,t} P_{gi}^{\min} - {{\hat \delta}_{gi}} \alpha _{gi,t}   \label{model_determined_c}\\
    & (\xi_{i,t}^ {+} ): P_{di,t} + 2H_{ei,t} {f_{\max }^{'}} P_{ei}^{\max}/f_0 \le P_{ei}^{\max} {\rm{-}} {{\hat \delta}_{di}} \alpha _{di,t}  \label{model_determined_d}\\
    & (\nu_{i,t}^ {+} ): P_{ci,t} + 2H_{ei,t} {f_{\max }^{'}} P_{ei}^{\max}/f_0 \le P_{ci}^{\max} {\rm{-}} {{\hat \delta}_{ci}} \alpha _{ci,t}  \label{model_determined_e}\\
    & (\eta_{i,t}): e_{i,t} = e_{i,t-1} + (P_{ci,t} + \mathrm{M}_{pt} \alpha_{ci,t} )  {k_i}  \nonumber \\
    & \hspace{1.6cm} - (P_{di,t} + \mathrm{M}_{pt} \alpha_{di,t}) /{k_i}\label{model_determined_f}\\
    & (\chi_t): \sum\limits_{i \in \set{I}} {\left( {u_{gi,t} H_{gi} P_{gi}^{\max} {\rm{+}} H_{ei,t} P_{ei}^{\max} {\rm{+}} (H_{wi,t} {\rm{+}} {\hat \delta}_{hi})P_{wi}^{\max}} \right)}  \nonumber \\
    & \hspace{0.8cm} \ge P_{sys} H_{\min}. \label{model_determined_g}
\end{align}%
\label{model_determined}%
\end{subequations}%
\allowdisplaybreaks[0]%
Note that ${{\hat \delta }_{gi}} \alpha _{gi,t}$ effectively determines the amount of balancing reserve that the generator at node $i$ procures.
Since all the constraints in \cref{model_determined} are linear, the solving complicity of \cref{model_determined} is comparable to traditional UC models, which can be efficiently solved by the system operators.

\subsubsection{Convex CC-UC Equivalent}

Model \cref{model_determined} is a mixed-integer quadratic program (MIQP) due to the presence of binary variables $u_{gi,t}$ and its quadratic objective function. This model is non-convex, but can be solved by modern off-the-shelf solvers (e.g., CPLEX, Gurobi). To obtain dual variables needed to compute marginal prices, we need to convert \cref{model_determined} into an equivalent convex quadratic program (QP):
\allowdisplaybreaks
\begin{subequations}
\begin{align}
&\min_{\substack{\{P_{gi,t}, P_{di,t}, P_{ci,t},u_{gi,t},\alpha _{gi,t}\\  \alpha _{di,t},\alpha _{ci,t},H_{ei,t}\}_{t \in \set{T}, i \in \set{I}} }}\  C_G+C_{ES} \label{model_QP_a}\\
    \text{s.t. }&\forall t\in\set{T}, \forall i\in\set{I}: \text{ \cref{model_1_h}-\cref{model_1_i}, \cref{model_1_l}, \cref{model_2_h}-\cref{model_2_p}},\nonumber\\
    &  \text{\cref{model_determined_b}-\cref{model_determined_g}} \nonumber \\
    & (\kappa _{i,t}): u_{gi,t} = u_{gi,t}^{*},   \label{model_QP_b}
\end{align}%
\label{model_QP}%
\end{subequations}%
\allowdisplaybreaks[0]%
by fixing binary variables $u_{gi,t}=u_{gi,t}^*$, where $u_{gi,t}^*$ are the results obtained after \cref{model_determined} has been solved once by a MIQP solver. This approach follows the results in \cite{kuang2019pricingquadratic}, where the authors showed that the optimal solution of the MIQP is equal to the optimal solution of the QP derived through the aforementioned method. Further, according to \cite{o2005efficient, kuang2019pricingquadratic}, this QP has dual multipliers that have the traditional economic interpretation as prices and clear the market in the presence of nonconvexities.
In this paper, although we assign the same dual multiplier to constraints with the same structure and function for all models (including \cref{model_1}, \cref{model_2}, \cref{model_determined} and \cref{model_QP}) to keep the notations concise, only the dual multipliers of the final convex QP model \cref{model_determined} will be used to derive the prices in the next section.

Alternative approaches to convexify \cref{model_QP} are possible \cite{gribik2007market}. For example, instead of fixing binaries $u_{gi,t}$ they can be relaxed to the unit interval $u_{gi,t}\in[0,1]$. Alternatively, \cref{model_QP} can be approximated with a tight convex hull. 
As the former method introduces inaccuracies and the the latter method is computationally demanding, we omit a detailed discussion of these methods in this paper.

\section{Energy, Reserve and Inertia Prices}
\label{sec:prices}

The convex CC-UC model in \cref{model_QP} is used to obtain and analyze the following three prices. First, the active power price is derived from dual multiplier $\lambda_t$ of the power balance constraint in \cref{model_1_i}. Second, the price of balancing regulation is derived from dual multiplier $\gamma_t$ of the reserve sufficiency requirement in \cref{model_2_p}. Third, the price of inertia is derived from dual multiplier $\chi_t$ of the inertia requirement \cref{model_determined_g}.
In the following, we derive Propositions~\ref{prop:active_power_price}-\ref{prop:inertia_price} to highlight individual price components, which facilitate the analysis of whether and how certain constraints will affect the prices.

\subsection{Energy (Active Power) Pricing}
\label{subsec:price_for_active_power}

\begin{proposition}
\label{prop:active_power_price}
Consider the model in \cref{model_QP}. Let $\lambda_t$ be the active power price defined as dual multipliers of constraint \cref{model_1_i}. Then $\lambda_t$ is given by:
\begin{align}
    \lambda _t {\rm{=}} \frac{{\sum\limits_{i \in \set{I}} {\left[ {(\mu _{i,t}^ {+}  {\rm{-}} \mu _{i,t}^{-}  {\rm{+}} c_{1i})/2c_{2i} {\rm{+}} \mathrm{M}_{pt} \alpha _{gi,t} {\rm{-}} P_{di,t} {\rm{+}} P_{ci,t} {\rm{+}} d_{i,t}} \right]} }}{{\sum\nolimits_{i \in \set{I}} {1/(2c_{2i})} }},
    \label{lambda_t}
\end{align}
where $\mu _{i,t}^ {+}$ and $\mu _{i,t}^ {-}$ are the dual multipliers of \cref{model_1_b} and \cref{model_1_c}.
\end{proposition}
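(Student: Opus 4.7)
The plan is to derive the closed form for $\lambda_t$ by applying the KKT stationarity condition of \cref{model_QP} with respect to the decision variable $P_{gi,t}$, and then substituting the resulting expression for $P_{gi,t}$ into the power-balance constraint \cref{model_1_i}. Since \cref{model_QP} is a convex quadratic program with linear equality/inequality constraints and binary variables fixed to their optimal values (via \cref{model_QP_b}), strong duality holds and the KKT conditions are necessary and sufficient for optimality; in particular, stationarity pins down each primal variable in terms of the dual multipliers.

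First I would form the Lagrangian and collect only those terms involving $P_{gi,t}$. Among the constraints \cref{model_1_h}--\cref{model_1_i}, \cref{model_1_l}, \cref{model_2_h}--\cref{model_2_p}, \cref{model_determined_b}--\cref{model_determined_g} and \cref{model_QP_b}, the variable $P_{gi,t}$ appears in the quadratic cost term $C_G$ from \cref{CG}, in the reformulated generator-capacity chance constraints \cref{model_determined_b} and \cref{model_determined_c} (with multipliers $\mu_{gi,t}^{+}$ and $\mu_{gi,t}^{-}$), and in the expected power-balance constraint \cref{model_1_i} (with multiplier $\lambda_t$). Differentiating $C_G$ with respect to $P_{gi,t}$ yields $c_{1i} + 2c_{2i}(P_{gi,t} + \mu_p\alpha_{gi,t})$, and combining with the contributions $+\mu_{gi,t}^{+}$, $-\mu_{gi,t}^{-}$ and $-\lambda_t$ from the three constraints gives the stationarity condition
\begin{equation}
c_{1i} + 2c_{2i}\bigl(P_{gi,t} + \mu_p\alpha_{gi,t}\bigr) + \mu_{gi,t}^{+} - \mu_{gi,t}^{-} - \lambda_t = 0.
\nonumber
\end{equation}

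Second, I would solve this linear equation for $P_{gi,t}$, obtaining $P_{gi,t} = (\lambda_t - c_{1i} - \mu_{gi,t}^{+} + \mu_{gi,t}^{-})/(2c_{2i}) - \mu_p\alpha_{gi,t}$. Substituting this expression into the system-wide power balance \cref{model_1_i}, namely $\sum_{i\in\set{I}}(P_{gi,t} + P_{di,t} - P_{ci,t} + P_{wi,t} - d_{i,t}) = 0$, isolates $\lambda_t\sum_{i\in\set{I}} 1/(2c_{2i})$ on one side, and collecting the remaining terms and dividing through yields exactly the formula \cref{lambda_t} claimed in the proposition.

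I do not anticipate any substantive obstacle here: the derivation is a routine KKT calculation enabled by having fixed $u_{gi,t}=u_{gi,t}^{*}$ in \cref{model_QP_b}, which renders the problem a convex QP and allows us to invoke stationarity directly. The only care needed is bookkeeping: making sure that no other constraint in \cref{model_QP} contributes to $\partial/\partial P_{gi,t}$ (the ES, inertia, reserve-adequacy, and participation-factor constraints do not involve $P_{gi,t}$), and handling the expected-cost expansion in \cref{CG} correctly so that the cross-term $2c_{2i}\mu_p\alpha_{gi,t} P_{gi,t}$ is included. With these checks, \cref{lambda_t} follows immediately from the two substitutions outlined above.
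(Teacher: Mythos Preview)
Your proposal is correct and follows essentially the same approach as the paper: derive the KKT stationarity condition with respect to $P_{gi,t}$, solve it for $P_{gi,t}$, and substitute back into the power-balance constraint \cref{model_1_i} to isolate $\lambda_t$. The paper's proof additionally lists the full set of KKT conditions (which it reuses for later propositions), but the logical content for this proposition is identical to yours.
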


\begin{proof}
The Karush-Kuhn-Tucker (KKT) optimality conditions for the model in \cref{model_QP} are shown in \cref{KKT}:
\allowdisplaybreaks
\begin{subequations}
\begin{align}
    (P_{gi,t})\!:\ &c_{1i} {\rm{+}} 2c_{2i} \left( {P_{gi,t} {\rm{+}} \mathrm{M}_{pt} \alpha _{gi,t}} \right){\rm{+}} \mu _{i,t}^ {+}  {\rm{-}} \mu_{gi,t}^ {-} {\rm{-}} \lambda_t {\rm{=}} 0 \label{KKT_a}\\
    (P_{di,t})\!:\ &c_{di} + \xi _{i,t}^ {+}  - \xi _{i,t}^{-} - \eta_{i,t}  /k_i - \lambda_t = 0 \label{KKT_b}\\
    (P_{ci,t})\!:\ & - c_{ci} + \nu _{i,t}^{+} - \nu _{i,t}^{-} + \eta_{i,t}   k_i + \lambda_t = 0 \label{KKT_c}\\
    (e_{i,t})\!:\ &\beta_{i,t}^{+} - \beta_{i,t}^{-} + \eta_{i,t} - \eta_{i,t + 1} = 0 \label{KKT_d}\\
    (\alpha _{gi,t})\!:\ &c_{1i} \mathrm{M}_{pt} +2c_{2i} \left[ {\mathrm{M}_{pt} P_{gi,t}+\alpha _{gi,t} \left( {\Sigma_{pt}^2+ \mathrm{M}_{pt}^2} \right)} \right] \nonumber \\
    & + \mu _{i,t}^{+} {\hat \delta }_{gi} + \tau_{gi,t}^{-} {\hat \delta }_{gi} + \rho_{gi,t}^{+} - \rho_{gi,t}^{-} - \gamma_t = 0 \label{KKT_e} \\
    (\alpha _{di,t})\!:\ &c_{di} \mathrm{M}_{pt} + \xi_{i,t}^{+} {\hat \delta }_{di} + \rho_{di,t}^{+} -\rho_{di,t}^{-} - \gamma_t = 0 \label{KKT_f}\\
    (\alpha_{ci,t})\!:\ &c_{ci} \mathrm{M}_{pt} +\nu_{i,t}^{+}{\hat \delta }_{ci}+\rho _{ci,t}^{+} - \rho_{ci,t}^{-} + \gamma_t=0 \label{KKT_g}\\
    (u_{gi,t})\!:\ &{c_{0i}} {\rm{-}} \mu _{i,t}^{+}  {\rm{+}} \mu _{i,t}^{-}  {\rm{+}} \kappa_{i,t} {\rm{-}} \rho _{gi,t}^{+} {\rm{-}} \chi_t H_{gi} P_{gi}^{\max} = 0 \label{KKT_h}\\
    (H_{ei,t})\!:\ & \varepsilon_{i,t}- \chi_t P_{ei}^{\max}+2\left( {\xi_{i,t}^{+}  + \nu_{i,t}^{+} } \right){f_{\max }^{\prime}}P_{ei}^{\max}/f_0 \nonumber \\
    & + 2\left( {\beta _{i,t}^{+} + \beta _{i,t}^{-}} \right) \Delta f_{\max } P_{ei}^{\max}/f_0 = 0 \label{KKT_i}
\end{align}%
\label{KKT}%
\end{subequations}%
\allowdisplaybreaks[0]%
From \cref{KKT_a} we obtain: 
\begin{align}
    P_{gi,t} = (- \mu _{i,t}^{+} + \mu _{i,t}^{-} + \lambda _t - c_{1i}) /2c_{2i} - \mathrm{M}_{pt} \alpha _{gi,t} \label{lambda_tp}.
\end{align} 
Substituting \cref{lambda_tp} into \cref{model_1_i} returns $\lambda_t$ as in \cref{lambda_t}.
\end{proof}

Notably, $\lambda_t$ can also be expressed directly from \cref{KKT_a}, \cref{KKT_b} and \cref{KKT_c} as:
\begin{align}
    \lambda _t &= c_{1i} + 2c_{2i}(P_{gi,t} + \mathrm{M}_{pt} \alpha _{gi,t}) + \mu _{i,t}^{+}  - \mu _{i,t}^{-} \label{lambda_g}\\
    \lambda _t &= c_{di} + \xi _{i,t}^{+}  - \xi _{i,t}^{-}  - \eta _{i,t} {\rm{/}} k_i \label{lambda_d}\\
    \lambda _t &= c_{ci} - \nu _{i,t}^{\rm{ + }} + \nu _{i,t}^{-} - \eta _{i,t}   k_i. \label{lambda_c}
\end{align}
The difference between \cref{lambda_t} and \cref{lambda_g}-\cref{lambda_c} is that \cref{lambda_t} is derived from the perspective of the whole system, while \cref{lambda_g}-\cref{lambda_c} are derived from the resource perspective and capture the marginal cost of generators and ES discharging/charging. In an ideal market, \cref{lambda_g}-\cref{lambda_c} and \cref{lambda_t} are such that marginal cost is equal to the system marginal price $\lambda_t$ \cite{Kirschen2018}.

\subsection{Reserve Pricing}
\label{subsec:price_for_reserve}

\begin{proposition}
\label{prop:reserve_price}
Consider the model in \cref{model_QP}. Let $\gamma_t$ be the reserve price defined as dual multipliers of constraint \cref{model_2_p}. Then $\gamma_t$ is given by:
\begin{align}
    \gamma _t &{\rm{=}} \frac{{\sum\limits_{i \in \set{I}} { [ {c_{1i}\mathrm{M}_{pt} {\rm{+}} ( {\mu _{i,t}^{+}  {\rm{+}} \mu _{i,t}^{-} } ){\hat \delta }_{gi} {\rm{+}} \rho_{gi,t}^{+} {\rm{-}} \rho_{gi,t}^{-}} ]/2c_{2i}}{\rm{+}} \mathrm{M}_{pt} P_{gi,t} }}{{\sum\nolimits_{i \in \set{I}} {1/(2c_{2i})} }} \nonumber\\
    & + \frac{{ \left [1 - \sum\nolimits_{i \in \set{I}} {( {{\alpha _{di,t}} - {\alpha_{ci,t}}} ) }\right] \left( {\Sigma_{pt} ^2 + \mathrm{M}_{pt} ^2} \right) }}{{\sum\nolimits_{i \in \set{I}} {1/\left( {2c_{2i}} \right)} }}, \label{gamma_t}
\end{align}
where $\rho_{gi,t}^{+}$ and $\rho_{gi,t}^{-}$ are the dual multipliers of \cref{model_1_h}.
\end{proposition}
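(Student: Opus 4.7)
The plan is to mirror the derivation used in \Cref{prop:active_power_price}: identify the KKT stationarity conditions in \cref{KKT} that contain $\gamma_t$, solve one of them in closed form for a participation factor, and then substitute into the primal constraint to which $\gamma_t$ is attached, namely the reserve-sufficiency equation \cref{model_2_p}. Since $\gamma_t$ appears in exactly three stationarity conditions --- \cref{KKT_e}, \cref{KKT_f}, \cref{KKT_g}, corresponding to $\alpha_{gi,t}$, $\alpha_{di,t}$, $\alpha_{ci,t}$ --- and since only \cref{KKT_e} contains a participation factor multiplied by a resource-specific quadratic coefficient $2c_{2i}(\sigma_p^2+\mu_p^2)$, \cref{KKT_e} is the natural starting point: it is the condition whose aggregation over $i$ produces a weighted-average structure with $\sum_{i}1/(2c_{2i})$ in the denominator, exactly matching \cref{gamma_t}. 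The ES-side conditions \cref{KKT_f} and \cref{KKT_g} are linear in $\alpha_{di,t}$, $\alpha_{ci,t}$ (ES costs are linear), so each directly provides a per-resource identity for $\gamma_t$ analogous in role to \cref{lambda_g}-\cref{lambda_c}, rather than an aggregated closed form.

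Concretely, I would first rearrange \cref{KKT_e} to write $\alpha_{gi,t}$ as an affine function of $\gamma_t$ and of the remaining duals and primals, dividing through by $2c_{2i}(\sigma_p^2+\mu_p^2)$. I would then sum this expression over $i\in\set{I}$ and use \cref{model_2_p} to replace $\sum_{i}\alpha_{gi,t}$ by $1-\sum_{i}(\alpha_{di,t}-\alpha_{ci,t})$. Isolating $\gamma_t$ from the resulting scalar equation yields \cref{gamma_t}. The first fractional term collects the per-resource contributions arising from the generator marginal cost $c_{1i}\mu_p$, the active-power chance-constraint shadow prices $\mu_{gi,t}^{\pm}\hat\delta_{gi}$, the participation-factor bound shadow prices $\rho_{gi,t}^{+}-\rho_{gi,t}^{-}$, and the cross-term $\mu_p P_{gi,t}$; the second fractional term --- carrying the $(\sigma_p^2+\mu_p^2)$ factor --- represents the residual reserve obligation that generators must cover once the share absorbed by ES is subtracted from unity.

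The step requiring the most care is purely algebraic bookkeeping of the variance-plus-mean-squared factor. In \cref{KKT_e} the term $(\sigma_p^2+\mu_p^2)$ multiplies $\alpha_{gi,t}$ rather than a constant, so when $\alpha_{gi,t}$ is isolated it sits in the denominator $2c_{2i}(\sigma_p^2+\mu_p^2)$. After summing over $i$ and solving for $\gamma_t$, this factor must be peeled off cleanly so that it appears only in front of $[1-\sum_i(\alpha_{di,t}-\alpha_{ci,t})]$ in the numerator of the second term of \cref{gamma_t}, while the denominator is just $\sum_{i}1/(2c_{2i})$. Beyond this mild subtlety, the derivation reduces to mechanical linear manipulation of the KKT identities and primal feasibility; no additional optimization-theoretic ingredient is needed, and the result can be sanity-checked by verifying that equating \cref{gamma_t} with the per-resource expressions implied by \cref{KKT_f} and \cref{KKT_g} recovers the standard marginal-cost equalization across all reserve-providing resources in equilibrium.
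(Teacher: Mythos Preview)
Your proposal is correct and follows essentially the same approach as the paper: solve the stationarity condition \cref{KKT_e} for $\alpha_{gi,t}$ (dividing by $2c_{2i}(\sigma_p^2+\mu_p^2)$), substitute into the reserve-sufficiency constraint \cref{model_2_p} using $\sum_i\alpha_{gi,t}=1-\sum_i(\alpha_{di,t}-\alpha_{ci,t})$, and isolate $\gamma_t$. The paper's proof is simply a two-line version of exactly what you describe.
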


\begin{proof}
From \cref{KKT_e} we obtain:
\begin{align}
    \alpha_{gi,t} &= \frac{{-c_{1i} \mathrm{M}_{pt} - 2c_{2i} \mathrm{M}_{pt} P_{gi,t}- (\tau_{gi,t}^{+} + \tau_{gi,t}^{-} ){{\hat \delta}_{gi}} }}{{2c_{2i}(\Sigma_{pt} ^2+\mathrm{M}_{pt}^2)}} \nonumber \\
    &+ \frac{{\gamma_t -{\rho_{gi,t}^{+}} + {\rho_{gi,t}^{-}}}}{{2c_{2i}(\Sigma_{pt} ^2+\mathrm{M}_{pt} ^2)}}. 
    \label{gamma_t_alpha}
\end{align} 
Substituting \cref{gamma_t_alpha} into \cref{model_2_p} returns $\gamma_t$ as in \cref{gamma_t}.
\end{proof}

Also, $\gamma_t$ can be expressed from \cref{KKT_e}, \cref{KKT_f} and \cref{KKT_g}:
\begin{align}
    \gamma _t &= c_{1i} \mathrm{M}_{pt} {\rm{+}} 2c_{2i}\left [\mu _p P_{gi,t} {\rm{+}} \alpha_{gi,t}(\sigma _p^2 {\rm{+}} \mathrm{M}_{pt}^2) \right] \nonumber \\
    &\hspace{0.4cm} + \mu _{i,t}^{+} {{\hat \delta }_{gi}} + \mu _{i,t}^{-} {{\hat \delta }_{gi}} + \rho_{gi,t}^{+}  - \rho_{gi,t}^{-} \label{gamma_g}\\
    \gamma _t &= c_{di} \mathrm{M}_{pt} + \xi _{i,t}^{+} {{\hat \delta}_{di}} + \rho _{di,t}^{+}  - \rho _{di,t}^{-} \label{gamma_d} \\
    \gamma_t &= -c_{ci} \mathrm{M}_{pt} - \nu _{i,t}^{\rm{+}}{{\hat \delta}_{ci}} - \rho _{ci,t}^{+}  + \rho _{ci,t}^{-}. \label{gamma_c}
\end{align}
Similarly to the energy price, \cref{gamma_t} accounts for the system perspective, while \cref{gamma_g}-\cref{gamma_c} represent the resource perspective. 
For example, if the generator at node $i$ is the marginal reserve provider, i.e., $\alpha_{gi,t}>0$, with non-binding power constraints in \cref{model_determined_b} and \cref{model_determined_c} and reserve constraints in  \cref{model_1_h}, then dual multipliers $\tau_{gi,t}^{+}$, $\tau_{gi,t}^{-}$, $\rho_{gi,t}^{+}$ and $\rho_{gi,t}^{-}$ are equal to zero. As a result, \cref{gamma_g} becomes:
\begin{align}
    \gamma_t = c_{1i}\mathrm{M}_{pt} + 2c_{2i}  \left[\mathrm{M}_{pt} P_{gi,t} + \alpha_{gi,t} (\Sigma_{pt}^2+ \mathrm{M}_{pt}^2) \right], \label{gamma_simple}
\end{align}
i.e., the reserve price is determined by the cost coefficients and power output of the generator at node $i$ and system-wide uncertainty parameters $\Sigma_{pt}$ and $\mathrm{M}_{pt}$. Comparing \cref{gamma_simple} with \cref{lambda_g} also reveals that both $\lambda_t$ and $\gamma_t$ directly depend on the systematic wind power forecast error (captured by mean $\mathrm{M}_{pt}$), while $\gamma_t$ additionally depends on standard deviation $\Sigma_{pt}$ explicitly.

Further, if ES at node $i$ is a marginal reserve provider at time $t$, then \cref{gamma_d} results in  $\gamma_t = c_{di} \mathrm{M}_{pt}$, i.e., the reserve price is equal to the marginal discharging cost of the ES at node $i$ multiplied by the mean value of wind power forecast error and does not directly depend on discharging power $P_{di,t}$ and participation factor $\alpha_{di,t}$. However, as per \cref{gamma_t}, $\alpha_{di,t}$ and $\alpha_{ci,t}$ will also affect $\gamma_t$, because the change in $\alpha_{di,t}$ and $\alpha_{ci,t}$ may change the value of $\alpha_{gi,t}$.

\subsection{Inertia Pricing}
\label{subsec:price_for_inertia}

\begin{proposition}
\label{prop:inertia_price}
Consider the model in \cref{model_QP}. Let $\chi_t$ be the reserve price defined as dual multipliers of constraint \cref{model_determined_g}. If \cref{model_determined_g} is binding, then the non-zero $\chi_t$ is given by:
\begin{align}
    \chi_t {\rm{=}} \frac{{\sum\nolimits_{i \in \set{I}} {\left[{u_{gi,t}}({c_{0i}} {\rm{-}} \mu_{i,t}^+  {\rm{+}} \mu_{i,t}^- {\rm{+}} \kappa_{i,t} {\rm{-}} \rho_{i,t}^+)\right]} }}{{P_{sys} H_{\min} {\rm{-}} \sum\limits_{i \in \set{I}} {[ {H_{ei,t}P_{ei}^{\max} {\rm{+}} (H_{wi,t} {\rm{+}} {{\hat \delta}_{wi}})P_{wi}^{\max}} ]}}}. \label{chi_t}
\end{align}
\end{proposition}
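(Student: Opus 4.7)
The plan is to follow the same blueprint as the proofs of Propositions~\ref{prop:active_power_price} and~\ref{prop:reserve_price}: identify the KKT stationarity condition that contains $\chi_t$, solve it for the price variable, and then substitute into the constraint whose multiplier $\chi_t$ is. Inspecting the KKT system in \cref{KKT}, the natural choice is \cref{KKT_h} (the stationarity condition for $u_{gi,t}$), since the cost-related and commitment-related multipliers on its left-hand side exactly match the numerator of \cref{chi_t}. The condition \cref{KKT_i} for $H_{ei,t}$ also carries $\chi_t$, but through ES-side quantities, so it would correspond to a resource-side representation rather than the aggregate system expression sought here.

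Starting from \cref{KKT_h}, I would first rearrange it into $\chi_t H_{gi} P_{gi}^{\max} = c_{0i} - \mu_{i,t}^{+} + \mu_{i,t}^{-} + \kappa_{i,t} - \rho_{gi,t}^{+}$. Next, I would multiply both sides by the commitment status $u_{gi,t}$ and sum over $i\in\set{I}$, so that the left-hand side becomes $\chi_t \sum_{i\in\set{I}} u_{gi,t} H_{gi} P_{gi}^{\max}$, which is exactly the synchronous-inertia term appearing in \cref{model_determined_g}. By the hypothesis that \cref{model_determined_g} is binding (so that complementary slackness is compatible with $\chi_t \ne 0$), this sum equals $P_{sys} H_{\min} - \sum_{i\in\set{I}}\left[H_{ei,t} P_{di}^{\max} + (H_{wi,t} + \hat\delta_{wi}) P_{wi}^{\max}\right]$. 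Dividing then yields the claimed formula \cref{chi_t}.

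The step I expect to require the most care is the multiplication by $u_{gi,t}$ and the subsequent summation. Because $u_{gi,t}$ is fixed in \cref{model_QP} through the equality \cref{model_QP_b}, the stationarity relation \cref{KKT_h} holds independently at every bus, with the auxiliary multiplier $\kappa_{i,t}$ absorbing the binding equality; multiplying by $u_{gi,t}$ is then merely a bookkeeping device that zeros out contributions from non-committed units, which is precisely the form in which synchronous inertia enters \cref{model_determined_g}. The only remaining point is that the denominator in \cref{chi_t} must be strictly positive for the expression to be well defined, which follows from the binding constraint: a nonzero $\chi_t$ can only arise when the virtual-inertia contributions from ES and RES fall short of the system requirement $P_{sys} H_{\min}$, guaranteeing a positive residual to be supplied by synchronous units.
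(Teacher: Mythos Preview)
Your proposal is correct and follows essentially the same approach as the paper: both start from the stationarity condition \cref{KKT_h} and combine it with the binding inertia constraint \cref{model_determined_g} to isolate $\chi_t$. The only cosmetic difference is that the paper first solves \cref{KKT_h} for $H_{gi}$ and substitutes that expression into the binding constraint, whereas you multiply \cref{KKT_h} by $u_{gi,t}$ and sum directly; these are algebraically equivalent manipulations, and your version has the minor advantage of not dividing by $\chi_t$ before the nonzero hypothesis is invoked.
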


\begin{proof}
From \cref{KKT_h} we obtain:
\begin{align}
H_{gi} = (c_{0i} - \mu_{i,t}^+  + \mu_{i,t}^- + \kappa _{i,t} - \rho_{gi,t}^+ )/(\chi_t P_{gi}^{\max}).\label{H_gi}
\end{align}
Whenever $\chi_t\neq0$, constraint \cref{model_determined_g} is binding. 
Thus, by substituting \cref{H_gi} into \cref{model_determined_g} we obtain \cref{chi_t}.
\end{proof}
Also, $\chi_t$ can be expressed from the perspective of generator and ES respectively based on \cref{KKT_h} and \cref{KKT_i}:
\begin{align}
    \chi_t &{\rm{=}} {\left( {c_{0i} {\rm{-}} \tau_{gi,t}^{+}  {\rm{+}} \tau_{gi,t}^{-} {\rm{+}} \kappa_{i,t} {\rm{-}} \rho_{gi,t}^{+} } \right)}/({H_{gi} P_{gi}^{\max})} \label{chi_g}\\
    \chi _t &{\rm{=}} \frac {2}{f_0}\left[ {(\xi_{i,t}^{+}  {\rm{+}} \nu_{i,t}^{+} ){f_{\max}^{\prime}} {\rm{+}} (\beta_{i,t}^{+}  {\rm{+}}\beta_{i,t}^{-})\Delta {f_{\max}}} \right] {\rm{+}} \frac{\varepsilon_{i,t}}{P_{ei}^{\max}}. \label{chi_e}
\end{align}

If constraints \cref{model_1_h}, \cref{model_determined_b} and \cref{model_determined_c} are non-binding, then the dual parameters of these constraints ($\mu _{i,t}^{+}$, $\tau_{gi,t}^{-}$ and $\rho _{gi,t}^{+}$) are zero, reducing \cref{chi_g}  to:
\begin{align}
    \chi_t = (c_{0i} + \kappa_{i,t}) / (H_{gi}   P_{gi}^{\max}), \label{chi_simple}
\end{align}
i.e., the inertia price depends on no-load cost $c_{0i}$ and commitment price $\kappa_{i,t}$ of the generator at node $i$ and time $t$ and does not depend on its output power $P_{gi,t}$ or its reserve participation $\alpha_{gi,t}$. Similarly, if ES constraints at node $i$ (\cref{model_2_h}, \cref{model_2_i}, \cref{model_determined_d} and \cref{model_determined_e}) are non-binding, then all the dual variables ($\xi_{i,t}^{+}$, $\nu_{i,t}^{+}$, $\beta_{i,t}^{+}$ and $\beta_{i,t}^{-}$) are zero, which leads to $\chi_t=0$. This result demonstrates that the ES opportunity cost from providing virtual inertia is zero when ES constraints are non-binding. Thus, there is no additional cost for the ES to provide virtual inertia.

\subsection{Revenue, Cost and Profit of Producers}
\label{subsec:Revenue}

Given the energy, reserve and inertia prices ($\lambda_t,\gamma_t,\chi_t$) in  Propositions~\ref{prop:active_power_price}-\ref{prop:inertia_price}, the expected revenue of generators, ES and RES are:
\begin{align}
    R_{gi}(\lambda_t,\gamma_t,\chi_t) & {\rm{=}} \sum\nolimits_{t \in \set{T}} {\left( {\lambda _t  P_{gi,t} {\rm{+}} \gamma_t  \alpha_{gi,t} {\rm{+}} \chi_t  u_{gi,t} H_{gi} I_{gi}} \right)} \label{Rg}\\
    R_{ei}(\lambda_t,\gamma_t,\chi_t) & {\rm{=}} \sum\nolimits_{t \in \set{T}} {\left( {\lambda _t  P_{di,t} {\rm{+}}\gamma_t \alpha_{di,t} {\rm{+}} \chi_t  H_{ei,t} I_{ei}} \right)}  \label{Re}\\
    R_{wi}(\lambda_t,\gamma_t,\chi_t) & {\rm{=}} \sum\nolimits_{t \in \set{T}} {\chi_t  H_{wi,t} I_{wi}},  \label{Rw}
\end{align}
where $I_{gi} {\rm{=}} P_{gi}^{\max} / P_{sys}$, $I_{ei} {\rm{=}} P_{ei}^{\max} / P_{sys}$, $I_{wi} {\rm{=}} P_{wi}^{\max} / P_{sys}$. Note that the revenue of providing inertia is scaled by the relative inertia contribution $I_{gi}$, $I_{ei}$ and $I_{wi}$ of each generator, ES or RES. On the other hand, the expected cost of individual generator and ES as in \cref{Cgi,Cei} are given as
$C_{gi} = \sum\nolimits_{t \in \set{T}} C_{gi,t}$ and $C_{ei} = \sum\nolimits_{t \in \set{T}} C_{ei,t}$.

Assuming that the marginal RES production is zero, the expected profit of individual conventional ($\Pi_{gi}$), ES ($\Pi_{ei}$) and RES ($\Pi_{wi}$) resource is:
\begin{align}
    \Pi_{gi}(\lambda_t,\gamma_t,\chi_t) &= R_{gi}(\lambda_t,\gamma_t,\chi_t)-C_{gi} \label{PRg}\\
    \Pi_{ei}(\lambda_t,\gamma_t,\chi_t) &= R_{ei}(\lambda_t,\gamma_t,\chi_t)-C_{ei} \label{PRe}\\
    \Pi_{wi}(\lambda_t,\gamma_t,\chi_t) &= R_{wi}(\lambda_t,\gamma_t,\chi_t).  \label{PRw}
\end{align}

\subsection{Competitive Equilibrium}
\label{subsec:Equilibrium}

Next, we prove that the prices derived in  Propositions~\ref{prop:active_power_price}-\ref{prop:inertia_price} are efficient and incentive compatible, i.e., they constitute a competitive equilibrium. Similarly to  \cite{dvorkin2019chance}, we define the competitive equilibrium as follows:

\begin{definition}
\label{Definiation:Equilibrium}
A competitive equilibrium for the stochastic market defined by \cref{model_QP} is a set of prices \{$\lambda_t$, $\gamma_t$, $\chi_t$, $\forall t\in\set{T}$\} and a set of dispatch decisions \{$P_{gi,t}$, $P_{di,t}$, $P_{ci,t}$, $\alpha_{gi,t}$, $\alpha_{di,t}$, $\alpha_{ci,t}$, $u_{gi,t}$, $H_{ei,t}$, $\forall i\in\set{I}, \forall t\in\set{T}$\} that satisfy two conditions. First, the market clears such that power production and demand are balanced and the reserve and inertia requirements are met. Second, all producers maximize their profits, so that there is no incentive to deviate from the market outcomes.
\end{definition}

To show that the prices from Propositions~\ref{prop:active_power_price}-\ref{prop:inertia_price} lead to a competitive equilibrium, we model each generator and ES in a risk-neutral, profit-maximizing manner.
Thus, each generator chooses $P_{gi,t}$, $\alpha _{gi,t}$ and $u_{gi,t}$ using the following optimization:
\allowdisplaybreaks
\begin{subequations}
\begin{align}
&\max_{\substack{\{P_{gi,t},u_{gi,t},\alpha _{gi,t}\}_{t \in \set{T}, i \in \set{I}} }}\ \Pi_{gi}(\pi_{gi,t},\varphi_{gt},\psi_{gt}) \label{Competitive_g_a}\\
    & \text{s.t. } \forall t\in\set{T}: \cref{model_1_h}, \cref{model_determined_b}, \cref{model_determined_c}, \cref{model_QP_b},  \nonumber 
\end{align}%
\label{Competitive_g}%
\end{subequations}%
\allowdisplaybreaks[0]%
where $\Pi_{gi}$ denotes the profit function of the generator at node $i$ and {$\pi_{gi,t}$, $\varphi_{gt}$, $\psi_{gt}$} are the energy, reserve and inertia prices at node $i$ and time $t$. Similarly, each ES determines its $P_{di,t}$, $P_{ci,t}$, $\alpha_{di,t}$, $\alpha_{ci,t}$ and $H_{ei,t}$ by solving:
\allowdisplaybreaks
\begin{subequations}
\begin{align}
&\max_{\substack{\{ P_{di,t}, P_{ci,t}, \alpha_{di,t}, \alpha_{ci,t}, H_{ei,t} \}_{t \in \set{T}, i \in \set{I}} }}\  \Pi_{ei}(\pi_{ei,t},\varphi_{et},\psi_{et}) \label{Competitive_e_a}\\
    &\text{s.t. } \forall t\in\set{T}:
    \cref{model_2_h}-\cref{model_2_p},\cref{model_determined_d}-\cref{model_determined_f},\nonumber
\end{align}%
\label{Competitive_e}%
\end{subequations}%
\allowdisplaybreaks[0]%
where $\Pi_{ei}$ denotes the profit function of the ES at node $i$ and {$\pi_{ei,t}$, $\varphi_{et}$, $\psi_{et}$} are the energy, reserve and inertia prices at node $i$ at time $t$. 

\begin{theorem}
\label{Theorem:Equilibrium}
Let \{$P_{gi,t}^*$, $P_{di,t}^*$, $P_{ci,t}^*$, $\alpha_{gi,t}^*$, $\alpha_{di,t}^*$, $\alpha_{ci,t}^*$, $u_{gi,t}^*$, $H_{ei,t}^*$, $\forall i\in\set{I}, \forall t\in\set{T}$\} be an optimal solution of \cref{model_QP} and let \{$\lambda_t^*$, $\gamma_t^*$, $\chi_t^*$, $\forall t\in\set{T}$\} be the dual variables of \cref{model_QP}, then the set of production levels and prices \{$P_{gi,t}^*$, $P_{di,t}^*$, $P_{ci,t}^*$, $\alpha_{gi,t}^*$, $\alpha_{di,t}^*$, $\alpha_{ci,t}^*$, $u_{gi,t}^*$, $H_{ei,t}^*$, $\pi_{gi,t}$, $\varphi_{gt}$, $\psi_{gt}$, $\pi_{ei,t}$, $\varphi_{et}$, $\psi_{et}$, $\forall i\in\set{I}, \forall t\in\set{T}$\}\ constitutes a competitive equilibrium if $\pi_{gi,t}=\pi_{ei,t}=\lambda_t^*$, $\forall i\in\set{I},\ \forall t\in\set{T}$, $\varphi_{gt}=\varphi_{et}=\gamma_t^*$,  $\forall t\in\set{T}$, and $\psi_{gt}=\psi_{et}=\chi_t^*$, $\forall t\in\set{T}$.
\end{theorem}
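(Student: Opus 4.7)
The plan is to verify the two defining requirements of a competitive equilibrium separately: market clearing and individual profit maximization at the posted prices. Market clearing is immediate from the primal feasibility of the starred decisions in \cref{model_QP}, since the system-wide power balance \cref{model_1_i}, the reserve sufficiency \cref{model_2_p}, and the inertia adequacy \cref{model_determined_g} are enforced as constraints of the central program. Because the theorem instantiates $\pi_{gi,t}=\pi_{ei,t}=\lambda_t^*$, $\varphi_{gt}=\varphi_{et}=\gamma_t^*$, and $\psi_{gt}=\psi_{et}=\chi_t^*$, every producer faces the same uniform price for each product at each time and node, so the quantities announced by the central problem clear at these prices.

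For the profit-maximization condition I would use a Lagrangian decomposition argument. Once the binaries are fixed by \cref{model_QP_b}, the producer programs \cref{Competitive_g} and \cref{Competitive_e} become convex (concave-quadratic for the generator, linear for the ES), so their KKT systems are both necessary and sufficient for optimality. The Lagrangian of \cref{model_QP} separates by resource once the coupling terms attached to \cref{model_1_i}, \cref{model_2_p}, and \cref{model_determined_g} are redistributed to the respective per-producer summands: the portion involving only the variables of generator $i$ reproduces the Lagrangian of \cref{Competitive_g} with $(\pi_{gi,t},\varphi_{gt},\psi_{gt})=(\lambda_t^*,\gamma_t^*,\chi_t^*)$, and analogously for each ES with the prices of \cref{Competitive_e}. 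Consequently, the central stationarity conditions \cref{KKT_a}, \cref{KKT_e}, and \cref{KKT_h} coincide with the stationarity conditions of \cref{Competitive_g} at $(P_{gi,t}^*,\alpha_{gi,t}^*,u_{gi,t}^*)$, while \cref{KKT_b}--\cref{KKT_d}, \cref{KKT_f}, \cref{KKT_g}, and \cref{KKT_i} reproduce those of \cref{Competitive_e} at $(P_{di,t}^*,P_{ci,t}^*,\alpha_{di,t}^*,\alpha_{ci,t}^*,H_{ei,t}^*)$. Primal feasibility, dual feasibility, and complementary slackness for the producers' private constraints transfer directly from the central optimum by taking each producer's local multipliers equal to the corresponding multipliers of \cref{model_QP}, and by convexity this certifies global optimality for each profit-maximization subproblem.

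The main technical obstacle will be careful bookkeeping of the inertia contribution. The revenue expressions \cref{Rg}--\cref{Rw} remunerate each resource at the inertia price $\chi_t^*$ scaled by $I_{gi}=P_{gi}^{\max}/P_{sys}$, $I_{ei}=P_{di}^{\max}/P_{sys}$, or $I_{wi}=P_{wi}^{\max}/P_{sys}$, whereas the left-hand side of \cref{model_determined_g} and the corresponding stationarity conditions \cref{KKT_h} and \cref{KKT_i} carry the raw capacity coefficients $H_{gi}P_{gi}^{\max}$, $P_{di}^{\max}$, and $P_{wi}^{\max}$. I would have to verify that the $1/P_{sys}$ rescaling in the revenue formulas matches exactly the coefficients inherited from the central Lagrangian under the chosen dual-to-price identification; if the rescaling were to introduce a mismatch one would need to interpret $\chi_t^*$ as a system-normalized price or absorb the $P_{sys}$ factor into the prices. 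A secondary subtlety is the RES: since $H_{wi,t}$ is exogenous, no profit-maximization subproblem is actually solved for wind, so \cref{PRw} collapses to a passive revenue statement that is consistent with the central solution by construction and does not require a KKT comparison.
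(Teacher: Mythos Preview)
Your proposal is correct and takes essentially the same approach as the paper: the paper writes out the KKT conditions of the per-producer problems \cref{Competitive_g} and \cref{Competitive_e} explicitly and observes that, with the identifications $\pi=\lambda^*$, $\varphi=\gamma^*$, $\psi=\chi^*$, they coincide term-by-term with the central stationarity conditions \cref{KKT_a}--\cref{KKT_i}, which is precisely your Lagrangian-decomposition argument spelled out. Your caution about the $1/P_{sys}$ scaling in the inertia revenue is well placed---the paper's stationarity condition \cref{KKT_Competitive_g_c} is written with the unscaled coefficient $H_{gi}P_{gi}^{\max}$ rather than $H_{gi}I_{gi}$, so the paper tacitly absorbs the $P_{sys}$ factor into the inertia price exactly as you anticipate.
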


\begin{proof}
The KKT optimality conditions for each profit-maximizing generator in \cref{Competitive_g} are:
\allowdisplaybreaks
\begin{subequations}
\begin{align}
    (P_{gi,t})\!:\ & c_{1i} {\rm{+}} 2c_{2i} \left( {P_{gi,t} {\rm{+}} \mathrm{M}_{pt} \alpha _{gi,t}} \right){\rm{+}} \mu _{i,t}^{+}  {\rm{-}} \tau_{gi,t}^{-} {\rm{-}} \pi_{gi,t} {\rm{=}} 0 \label{KKT_Competitive_g_a}\\
    (\alpha _{gi,t}) \!:\ & c_{1i} \mathrm{M}_{pt} +2c_{2i} \left[ {\mathrm{M}_{pt} P_{gi,t}+\alpha _{gi,t} \left( {\Sigma_{pt}^2+ \mathrm{M}_{pt}^2} \right)} \right]\nonumber \\
    & + \mu _{i,t}^{+} {\hat \delta }_{gi}  + \tau_{gi,t}^{-} {\hat \delta }_{gi} + \rho_{gi,t}^{+} - \rho_{gi,t}^{-} - \varphi_{gt} = 0 \label{KKT_Competitive_g_b} \\   
    (u_{gi,t}) \!:\ & {c_{0i}} {\rm{-}} \mu _{i,t}^{+}  {\rm{+}} \mu _{i,t}^{-}  {\rm{+}} \kappa_{i,t} {\rm{-}} \rho _{gi,t}^{+} {\rm{-}} \psi_{gt} H_{gi} P_{gi}^{\max}  {\rm{=}} 0. \label{KKT_Competitive_g_c}
\end{align}%
\label{KKT_Competitive_g}%
\end{subequations}%
\allowdisplaybreaks[0]%
Using \cref{KKT_Competitive_g}, we can express $\pi_{gi,t}$, $\varphi_{gt}$ and $\psi_{gt}$ as:
\allowdisplaybreaks
\begin{align}
    \pi_{gi,t} &= c_{1i} + 2c_{2i}(P_{gi,t} + \mathrm{M}_{pt} \alpha_{gi,t}) + \mu _{i,t}^{+}  - \tau_{gi,t}^{-} \label{pi_g}\\
    \varphi_{gt} &= c_{1i} \mathrm{M}_{pt} {\rm{+}} 2c_{2i}\left [\mathrm{M}_{pt} P_{gi,t} {\rm{+}} \alpha_{gi,t}(\sigma _p^2 {\rm{+}} \mathrm{M}_{pt}^2) \right] \nonumber \\
    &\hspace{0.4cm} + \tau_{gi,t}^{+} {{\hat \delta}_{gi}} + \tau_{gi,t}^{-} {{\hat \delta}_{gi}} + \rho_{gi,t}^{+}  - \rho_{gi,t}^{-}  \label{varphi_g}\\
    \psi_{gt} &=  {\left( {c_{0i} - \tau_{gi,t}^{+}  + \tau_{gi,t}^{-}  + \kappa_{i,t} - \rho_{gi,t}^{+}} \right)}/{H_{gi} P_{gi}^{\max}}. \label{psi_g}
\end{align}%
\label{price_Competitive_g}%
\allowdisplaybreaks[0]%
By comparing the prices in \cref{pi_g}, \cref{varphi_g} and \cref{psi_g} and the prices in \cref{lambda_g}, \cref{gamma_g} and \cref{chi_g}, we can see that $\pi_{gi,t}=\lambda_t^*$, $\varphi_{gt}=\gamma_t^*$, and $\psi_{gt}=\chi_t^*$, $\forall i\in\set{I}$, $\forall t\in\set{T}$. 

Similarly, the KKT optimality conditions for each profit-maximizing ES in  \cref{Competitive_e} are:
\allowdisplaybreaks
\begin{subequations}
\begin{align}
    (P_{di,t})\!:\ &c_{di} + \xi_{i,t}^{+}  - \xi_{i,t}^{-} - \eta_{i,t}  /k_i - \pi_{ei,t} = 0 \label{KKT_Competitive_e_a}\\
    (P_{ci,t})\!:\ & - c_{ci} + \nu_{i,t}^{+} - \nu_{i,t}^{-} + \eta_{i,t}   k_i + \pi_{ei,t} = 0 \label{KKT_Competitive_e_b}\\
    (e_{i,t})\!:\ &\beta_{i,t}^{+} - \beta_{i,t}^{-} + \eta_{i,t} - \eta_{i,t + 1} = 0 \label{KKT_Competitive_e_c}\\
    (\alpha _{di,t})\!:\ &c_{di} \mathrm{M}_{pt} + \xi_{i,t}^{+} {\hat \delta }_{di} + \rho_{di,t}^{+} -\rho_{di,t}^{-} - \varphi_{et} = 0 \label{KKT_Competitive_e_d}\\
    (\alpha_{ci,t})\!:\ &c_{ci} \mathrm{M}_{pt} +\nu_{i,t}^{+}{\hat \delta }_{ci}+\rho _{ci,t}^{+} - \rho_{ci,t}^{-} + \varphi_{et}=0 \label{KKT_Competitive_e_e}\\
    (H_{ei,t})\!:\ &- \psi_{et} P_{ei}^{\max}+2\left( {\xi_{i,t}^{+}  + \nu_{i,t}^{+} } \right){f_{\max }^{\prime}}P_{ei}^{\max}/f_0 \nonumber \\
    &+ 2\left( {\beta _{i,t}^{+} + \beta _{i,t}^{-}} \right) \Delta f_{\max} P_{ei}^{\max}/f_0 = 0. \label{KKT_Competitive_e_f}
\end{align}%
\label{KKT_Competitive_e}%
\end{subequations}%
\allowdisplaybreaks[0]%
Using \cref{KKT_Competitive_e}, we can express $\pi_{ei,t}$, $\varphi_{et}$ and $\psi_{et}$ as:
\allowdisplaybreaks
\begin{align}
    \pi_{ei,t} &= c_{di} + \xi _{i,t}^{+} - \xi _{i,t}^{-} - \eta _{i,t} {\rm{/}} k_i \nonumber\\
    &= c_{ci} - \nu_{i,t}^{\rm{+}} + \nu_{i,t}^{-} - \eta_{i,t} k_i \label{pi_e}\\
    \varphi_{et} &= c_{di} \mathrm{M}_{pt} + \xi_{i,t}^{+} {{\hat \delta}_{di}} + \rho _{di,t}^{+}  - \rho _{di,t}^{-} \nonumber\\
    &= -c_{ci} \mathrm{M}_{pt} - \nu_{i,t}^{\rm{+}}{{\hat \delta}_{ci}} - \rho_{ci,t}^{+}  + \rho_{ci,t}^{-} \label{varphi_e} \\
    \psi_{et} &= \frac {2}{f_0}\left[ {(\xi_{i,t}^{+} {\rm{+}} \nu _{i,t}^{+} ){f_{\max}^{\prime}} {\rm{+}} (\beta_{i,t}^{+}  {\rm{+}} \beta_{i,t}^{-} )\Delta {f_{\max}}} \right]{\rm{+}} \frac{\varepsilon_{i,t}}{P_{ei}^{\max}} \label{psi_e}
\end{align}%
\label{price_Competitive_e}%
\allowdisplaybreaks[0]%
By comparing the prices in \cref{pi_e}, \cref{varphi_e} and \cref{psi_e} and the prices in \cref{lambda_d}, \cref{lambda_c}, \cref{gamma_d}, \cref{gamma_c} and \cref{chi_e}, we see that $\pi_{ei,t}=\lambda_t^*$, $\varphi_{et}=\gamma_t^*$, and $\psi_{et}=\chi_t^*$, $\forall i\in\set{I},\ \forall t\in\set{T}$. 

In summary, we  prove that $\pi_{gi,t}=\pi_{ei,t}=\lambda_t^*$,\ $\forall i\in\set{I},\ \forall t\in\set{T}$, $\varphi_{gt}=\varphi_{et}=\gamma_t^*$,  $\forall t\in\set{T}$, and $\psi_{gt}=\psi_{et}=\chi_t^*$, $\forall t\in\set{T}$. Thus, the set of production levels and prices \{$P_{gi,t}^*$, $P_{di,t}^*$, $P_{ci,t}^*$, $\alpha_{gi,t}^*$, $\alpha_{di,t}^*$, $\alpha_{ci,t}^*$, $u_{gi,t}^*$, $H_{ei,t}^*$, $\pi_{gi,t}$, $\varphi_{gt}$, $\psi_{gt}$, $\pi_{ei,t}$, $\varphi_{et}$, $\psi_{et}$, $\forall i\in\set{I}, \forall t\in\set{T}$\}\ constitutes a competitive equilibrium, i.e. by solving \cref{model_QP}, we can obtain the energy, reserve and inertia prices which clear the market and maximize the profit of all producers.
\end{proof}

\section{Network-Constrained Extension}
\label{sec:network_constrained_extension}

In this section we introduce DC power flow constraints into the CC-UC model in \cref{model_2} and demonstrate in Theorem~\ref{Theorem:LMP} that (i) energy prices now take the form locational marginal prices (LMPs) and (ii) that the results of  Propositions~\ref{prop:reserve_price}-\ref{prop:inertia_price} and Theorem~\ref{Theorem:Equilibrium} remain valid.
To simplify notation, we assume that there is at most one generator, one ES and one RES per node. 

The network-constrained case includes two modifications relative to \cref{model_QP}.
First, we replace the power balance constraint in \cref{model_1_i} with the nodal power balance constraint in \cref{model_net_b}. Second, we add dc power flow constraints in \cref{model_net_c} and set the voltage angle at reference node ($i=ref$) to 0, see \cref{model_net_d}.
The resulting network-constrained CC-UC model is given as:
\allowdisplaybreaks
\begin{subequations}
\begin{align}
    &\min_{\substack{\{P_{gi,t}, P_{di,t}, P_{ci,t},u_{gi,t},\alpha _{gi,t}\\  \alpha _{di,t},\alpha _{ci,t},H_{ei,t}\}_{t \in \set{T}, i \in \set{I}} }}\  C_G+C_{ES} \label{model_net_a}\\
    \text{s.t. }& \cref{model_1_g}-\cref{model_1_i}, \cref{model_1_l}, \cref{model_2_f}-\cref{model_2_p}, \cref{model_determined_b}-\cref{model_determined_g}, \cref{model_QP_b}:\nonumber\\ 
    & (\lambda_{i,t}): P_{gi,t} + P_{di,t} - P_{ci,t} + P_{wi,t}- d_{i,t} =  \nonumber \\
    & \hspace{1.1cm} \sum\nolimits_{j \in \mathcal{N}_i } B_{i,j} (\theta_{i,t} {\rm{-}} \theta_{j,t}) \quad \forall{i}\in\set{I},\ \forall{t}\in\set{T} \label{model_net_b}\\
    & (\vartheta _{i,j,t}^{-},\vartheta _{i,j,t}^{+}): -S_{i,j} \le B_{i,j} (\theta_{i,t}-\theta_{j,t}) \le S_{i,j} \nonumber \\
    & \hspace{1.1cm} \forall{i}\in\set{I},\ \forall{j}\in \mathcal{N}_i ,\ \forall{t}\in\set{T} \label{model_net_c}\\
    & (\varpi_{t}): \theta_{ref,t} = 0 \quad \forall{t}\in\set{T}, \label{model_net_d}
\end{align}%
\label{model_network}%
\end{subequations}%
\allowdisplaybreaks[0]%
where $\mathcal{N}_i $ is the set of nodes that are connected to node $i$, $\theta_{i,t}$ is the voltage angle of node $i$ at time $t$, $B_{i,j}$ is the susceptance of the line between node $i$ and $j$ and $S_{i,j}$ is the thermal capacity of the line between node $i$ and $j$.

\begin{theorem}
\label{Theorem:LMP}
Consider the model in \cref{model_network}. Then (i) energy prices $\lambda_{t}$ from Proposition~\ref{prop:active_power_price} become LMPs $\lambda_{i,t}$ given as:
\allowdisplaybreaks
\begin{align}
    \lambda_{i,t} &= c_{1i} + 2c_{2i}(P_{gi,t} + \mathrm{M}_{pt} \alpha _{gi,t}) + \mu _{i,t}^{+}  - \mu _{i,t}^{-} \label{lambda_i_g}\\
    \lambda_{i,t} &= c_{di} + \xi _{i,t}^{+}  - \xi _{i,t}^{-}  - \eta _{i,t} {\rm{/}} k_i \label{lambda_i_d}\\
    \lambda_{i,t} &= c_{ci} - \nu _{i,t}^{\rm{ + }} + \nu _{i,t}^{-} - \eta _{i,t}   k_i, \label{lambda_i_c}
\end{align}%
\allowdisplaybreaks[0]%
and (ii) the results of Propositions~\ref{prop:reserve_price}-\ref{prop:inertia_price} and Theorem~\ref{Theorem:Equilibrium} hold.
\end{theorem}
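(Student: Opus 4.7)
The plan is to mirror the proof strategy used for Proposition~\ref{prop:active_power_price} and Theorem~\ref{Theorem:Equilibrium}, but applied to the Lagrangian of \cref{model_network}, and then to verify that the additional network-related multipliers $\vartheta_{i,j,t}^{\pm}$ and $\varpi_t$ affect only the stationarity conditions that involve voltage angles $\theta_{i,t}$, leaving the stationarity conditions for $P_{gi,t}$, $P_{di,t}$, $P_{ci,t}$, $\alpha_{gi,t}$, $\alpha_{di,t}$, $\alpha_{ci,t}$, $u_{gi,t}$ and $H_{ei,t}$ structurally unchanged.

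First, I would write the stationarity conditions of \cref{model_network} with respect to $P_{gi,t}$, $P_{di,t}$ and $P_{ci,t}$. Because the power balance constraint is now nodal (each copy indexed by $i$), the dual multiplier appearing in each of these three stationarity conditions is exactly the local multiplier $\lambda_{i,t}$ rather than the system-wide $\lambda_t$. Consequently, the analogues of \cref{KKT_a}, \cref{KKT_b} and \cref{KKT_c} hold verbatim with $\lambda_t$ replaced by $\lambda_{i,t}$, which, solving for $\lambda_{i,t}$ at each node, yields \cref{lambda_i_g}--\cref{lambda_i_c}. This establishes part~(i).

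For part~(ii) on reserve and inertia prices, the key observation is that neither \cref{model_net_b}, \cref{model_net_c} nor \cref{model_net_d} involves $\alpha_{gi,t}$, $\alpha_{di,t}$, $\alpha_{ci,t}$, $u_{gi,t}$ or $H_{ei,t}$. Therefore the stationarity conditions with respect to these variables are identical to \cref{KKT_e}--\cref{KKT_i}, and the reserve-sufficiency constraint \cref{model_2_p} and inertia constraint \cref{model_determined_g} are retained unchanged. Consequently the derivations of $\gamma_t$ in Proposition~\ref{prop:reserve_price} and of $\chi_t$ in Proposition~\ref{prop:inertia_price} carry through without modification; one merely repeats the algebraic substitution steps.

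For the competitive-equilibrium part, I would repeat the structure of the proof of Theorem~\ref{Theorem:Equilibrium}: set up profit-maximization problems for each producer in which the energy price received at node $i$ is $\pi_{gi,t}$ (respectively $\pi_{ei,t}$), and write the KKT conditions of these producer problems. Because each producer sees only its own nodal energy price, the producer KKT stationarity conditions in $P_{gi,t}$, $P_{di,t}$, $P_{ci,t}$ are precisely \cref{KKT_Competitive_g_a} and \cref{KKT_Competitive_e_a}--\cref{KKT_Competitive_e_b} with $\pi_{gi,t}$ and $\pi_{ei,t}$ in the role formerly played by the uniform prices; matching against \cref{lambda_i_g}--\cref{lambda_i_c} gives $\lambda_{i,t}^{*}=\pi_{gi,t}=\pi_{ei,t}$ at each node. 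The reserve and inertia matches $\gamma_t^{*}=\varphi_{gt}=\varphi_{et}$ and $\chi_t^{*}=\psi_{gt}=\psi_{et}$ follow exactly as in Theorem~\ref{Theorem:Equilibrium} since the relevant stationarity equations are unchanged.

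The main obstacle is essentially bookkeeping rather than substance: one must be careful that the added terms involving $\vartheta_{i,j,t}^{\pm}$ and $\varpi_t$ appear \emph{only} in the stationarity conditions with respect to $\theta_{i,t}$ and do not leak into the stationarity conditions of the dispatch and inertia variables, and that the nodal-versus-system distinction in the dual of the balance constraint is propagated consistently to the producer problems. Once this verification is made, parts~(i) and~(ii) both reduce to the corresponding arguments already given for \cref{model_QP}.
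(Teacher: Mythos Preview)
Your proposal is correct and follows essentially the same route as the paper's own proof: write the KKT conditions of \cref{model_network}, observe that the only change in the stationarity conditions for $P_{gi,t}$, $P_{di,t}$, $P_{ci,t}$ is the replacement of $\lambda_t$ by the nodal $\lambda_{i,t}$ (yielding \cref{lambda_i_g}--\cref{lambda_i_c}), and note that the stationarity conditions \cref{KKT_d}--\cref{KKT_i} for the reserve, commitment and inertia variables are untouched by the added network constraints, so Propositions~\ref{prop:reserve_price}--\ref{prop:inertia_price} and the equilibrium argument of Theorem~\ref{Theorem:Equilibrium} carry over verbatim. The paper additionally displays the $\theta_{i,t}$ stationarity conditions explicitly, but otherwise your outline matches it step for step.
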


\begin{proof}
The KKT conditions of \cref{model_network} are:
\allowdisplaybreaks
\begin{subequations}
\begin{align}
    & \text{\cref{KKT_d}-\cref{KKT_i}}  \nonumber \\ 
    (P_{gi,t})\!: &c_{1i} {\rm{+}} 2c_{2i} \left( {P_{gi,t} {\rm{+}} \mathrm{M}_{pt} \alpha _{gi,t}} \right){\rm{+}} \mu _{i,t}^{+}  {\rm{-}} \tau_{gi,t}^{-} {\rm{-}} \lambda_{i,t} {\rm{=}} 0 \label{KKT_net_a}\\
     (P_{di,t})\!: &c_{di} + \xi _{i,t}^{+}  - \xi _{i,t}^{-} - \eta_{i,t}  /k_i - \lambda_{i,t} = 0 \label{KKT_net_b}\\
     (P_{ci,t})\!: & - c_{ci} + \nu _{i,t}^{+} - \nu _{i,t}^{-} + \eta_{i,t}   k_i + \lambda_{i,t} = 0 \label{KKT_net_c}\\
    (\theta_{i,t})\!: &\sum\nolimits_{j \in \mathcal{N}_i } B_{i,j} (\lambda_{i,t} {\rm{-}} \lambda _{j,t} {\rm{+}} \vartheta_{i,j,t}^{+}  {\rm{-}} \vartheta_{j,i,t}^{+}  {\rm{-}} \vartheta_{i,j,t}^{-}  {\rm{+}} \vartheta_{j,i,t}^{-} ) \nonumber  \\
    & + \varpi_t  = 0 \quad i = ref \label{KKT_net_d}\\
    (\theta_{i,t})\!: &\sum\nolimits_{j \in \mathcal{N}_i } B_{i,j} (\lambda_{i,t} {\rm{-}} \lambda _{j,t} {\rm{+}} \vartheta_{i,j,t}^{+}  {\rm{-}} \vartheta_{j,i,t}^{+}  {\rm{-}} \vartheta_{i,j,t}^{-}  {\rm{+}} \vartheta_{j,i,t}^{-} )  \nonumber  \\
    & {\rm{=}} 0\ i \ne ref. \label{KKT_net_e}
\end{align}%
\label{KKT_net}%
\end{subequations}%
\allowdisplaybreaks[0]%
LMPs in \cref{lambda_i_g}-\cref{lambda_i_c} can be obtained directly from \cref{KKT_net_a}-\cref{KKT_net_c}, which proves (i).

The KKT conditions associated with reserve and inertia prices $\gamma_t$ and $\chi_t$ are identical for \cref{model_network} and \cref{model_QP} and, thus, Propositions \ref{prop:reserve_price}-\ref{prop:inertia_price} remain valid for \cref{model_network}. 
Further, since the right-hand sides of \cref{lambda_i_g}, \cref{lambda_i_d}, and \cref{lambda_i_c} are identical to \cref{pi_g} and \cref{pi_e}, respectively, and \cref{gamma_t}, \cref{chi_t} remain unchanged, duals \{$\lambda_{i,t}^*\ \forall i\in\set{I}$, $\gamma_t^*$, $\chi_t^*$, $\forall t\in\set{T}$\} and quantities \{$P_{gi,t}^*$, $P_{di,t}^*$, $P_{ci,t}^*$, $\alpha_{gi,t}^*$, $\alpha_{di,t}^*$, $\alpha_{ci,t}^*$, $u_{gi,t}^*$, $H_{ei,t}^*$, $\forall i\in\set{I}, \forall t\in\set{T}$\}  obtained from an optimal solution of \cref{model_network} also yield a competitive equilibrium. This concludes the proof of (ii).
\end{proof}

\section{Case Study}

\subsection{Illustrative Example}
\label{sec:numerical_experiments_toy_system}

We consider an illustrative example with four generators, two ES, one wind farm and the system-wide load as given in in Tables \ref{tab:G parameters}, \ref{tab:ES parameters} and \ref{tab:other parameters}. Generator $\text{G}_i,\ i{\rm{=}}1,2,3,4$ denotes the generator at node $i$ and $\text{ES}_i,\ i{\rm{=}}1,2$ denotes the ES at node $i$. The load and wind power profiles are given in Fig.~\ref{fig:small case load and wind}. For simplicity, we assume that the distribution parameters of $\omega_{pi,t}$ and $\omega_{hi,t}$ are time-invariant.
Note that in this illustrative case study, the charging and discharging prices of ES (i.e., $C_d$ and $C_c$ in Table \ref{tab:ES parameters}) are set to relatively low values, which ensures that the effect of adding ES as an inertia provider is obvious.
\begin{table}[!t]
  \centering
  \caption{Generator parameters}
    \begin{tabular}{cccccc}
    \toprule
    \multirow{2}[0]{*}{No.} & $H_g$ & $P_{g}^{\max}/P_{g}^{\min}$  & $C_0$ & $C_1$    & $C_2$  \\
    & (s)   & (MW)  &  ($\$$)  & ($\$$/MWh) & ($\$$/MWh$^2$) \\
    \midrule
    G1    & 6     & 10/1  & 10    & 5     & 0.001 \\
    G2    & 6     & 10/1  & 50    & 12    & 0.003 \\
    G3    & 6     & 10/1  & 80    & 15    & 0.005 \\
    G4    & 10    & 10/1  & 150   & 30    & 0.006 \\
    \bottomrule
    \end{tabular}%
  \label{tab:G parameters}
\end{table}

\begin{table}[!t]
  \centering
  \caption{ES parameters}
    \resizebox{\linewidth}{!}{%
    \begin{tabular}{cccccc}
    \toprule
    \multirow{2}[0]{*}{No.} & $H_e^{\max}$ & $P_d^{\max}$/$P_c^{\max}$  & $E^{\max}$/$E^{\min}$  & $C_d$  & $C_c$  \\
    & (s)   & (MW)  &  (MWh)  & (\$/MWh) & (\$/MWh) \\
    \midrule
    ES1   & 11  & 10 / 5 & 10 / 0.5   & 5   & 10 \\
    ES2   & 11  & 10 / 5 & 10 / 0.5   & 7   & 12 \\
    \bottomrule
    \end{tabular}
    }
  \label{tab:ES parameters}
\end{table}

\begin{table}[t!]
  \centering
  \caption{Auxiliary model parameters}
    \begin{tabular}{cc|cc|cc}
    \toprule
    Parameter     & Value      & Parameter          & Value        & Parameter  & Value\\
    \midrule
    $H_{\min}$    & 3.5 s      & $f_{\max}^{'}$     & 0.5 Hz/s  & $k_i,i{\rm{=}}1,2$              & 0.9 \\
    $P_{sys}$     & 80 MW      & $\Delta f_{\max}$  & 0.55 Hz  & $\epsilon_{gi},i{\rm{=}}1...4$  & 0.05 \\
    $P_w^{\max}$         & 20 MW      & $\Sigma_{pt}$, $\sigma_{hi,t}$  & 1   & $\epsilon_{di},i{\rm{=}}1,2$    & 0.05 \\
    $f_0$         & 50 Hz      & $\mathrm{M}_{pt}$, $\mu_{hi,t}$     & 0.5   & $\epsilon_{ci},i{\rm{=}}1,2$    & 0.05 \\
    \bottomrule
    \end{tabular}
  \label{tab:other parameters}
\end{table}

\begin{figure}[!t]
    \centering
    \includegraphics[width=1\linewidth]{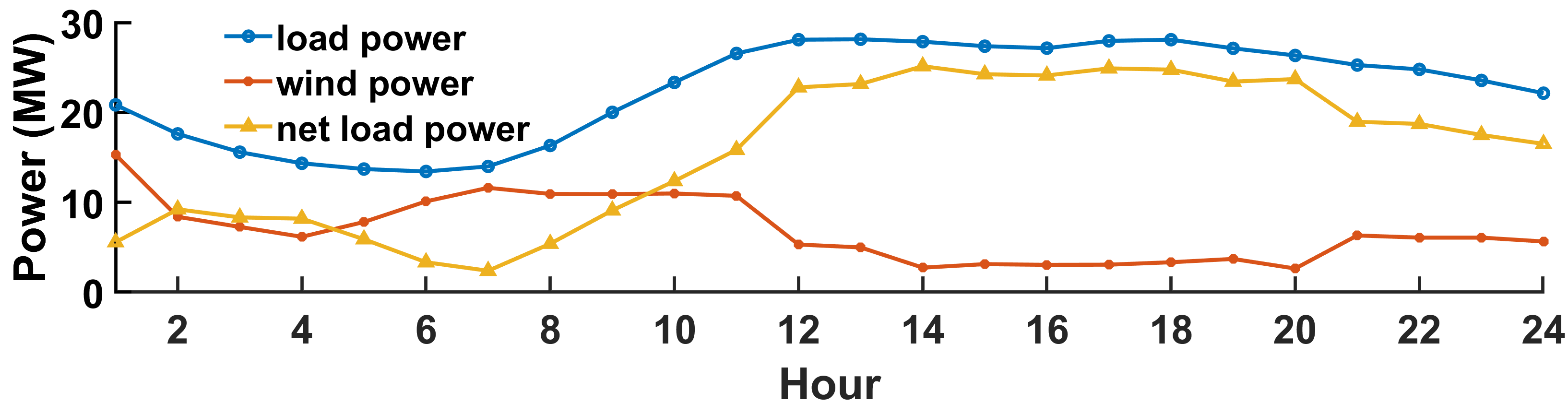}
    \caption{Load, wind power and net load profiles used in the illustrative example. }
    \label{fig:small case load and wind}
\end{figure}

Using the model in \cref{model_QP}, we compare its performance in six cases, which are summarized in Table \ref{tab:cases}, where Case 1 represents the status-quo and Case 6 allows for the energy, reserve and inertia provision by all committed generators, ES and RES. We also evaluate the effect of optimizing ES inertia constant $H_e$ (as in Cases 4-6 where $0 \le H_{ei}\le \unit[11]{s},\ i{\rm{=}}1,2$) over treating it as a constant (as in Case 3 where $H_{ei}=\unit[8]{s},\ i{\rm{=}}1,2$). Case 6 is modeled as \cref{model_QP}. For Cases 1-5, the model is derived from \cref{model_QP} by enforcing some variables to be zero. Specifically, in Case 1, $\alpha _{di,t},\ \alpha _{ci,t},\ H_{ei,t},\ H_{wi,t} = 0,\ \forall t\in\set{T}\ \forall i\in\set{I}$, in Case 2, $H_{ei,t},\ H_{wi,t} = 0,\ \forall t\in\set{T}\ \forall i\in\set{I}$, in Cases 3-4, $\alpha _{di,t},\ \alpha _{ci,t},\ H_{wi,t}=0\ \forall t\in\set{T}\ \forall i\in\set{I}$, and in Case 5, $H_{wi,t}=0,\ \forall t\in\set{T}\ \forall i\in\set{I}$.

\begin{table}[!t]
  \centering
  \caption{Overview of Studied Cases}
    \begin{tabular}{cccccccc}
    \toprule
    \multicolumn{2}{p{5.25em}}{Case} & 1 & 2  & 3  & 4  & 5  & 6 \\
    \midrule
    \multirow{3}[0]{*}{G} & Energy & \checkmark     &\checkmark     & \checkmark     & \checkmark    & \checkmark    & \checkmark \\
          & Reserve     & \checkmark    & \checkmark     & \checkmark     & \checkmark    & \checkmark    & \checkmark \\
          & Inertia     & \checkmark    & \checkmark     & \checkmark     & \checkmark    & \checkmark    & \checkmark\\
    \midrule
    \multirow{3}[0]{*}{ES} & Energy & \checkmark    & \checkmark     &\checkmark    & \checkmark    & \checkmark     & \checkmark \\
          & Reserve &   & \checkmark    &       &        & \checkmark     & \checkmark \\
          & Inertia &       &           & \checkmark (Con) & \checkmark(Var) & \checkmark(Var)   & \checkmark(Var) \\
    \midrule
    \multirow{2}[0]{*}{W} & Energy & \checkmark    & \checkmark   & \checkmark        & \checkmark    & \checkmark   & \checkmark \\
          & Inertia &       &       &       &       &       &\checkmark  \\
    \bottomrule
    \end{tabular}
  \label{tab:cases}
\end{table}

Fig.~\ref{fig:small case prices}-\ref{fig:small case UC} and Table~\ref{tab:economic results} summarize the market-clearing outcomes obtained in Cases 1-6 in terms of the energy, reserve and inertia prices, the commitment decisions of conventional units and the economic performance of resources.  

\begin{figure}[!t]
    \centering
    \includegraphics[width=1\linewidth]{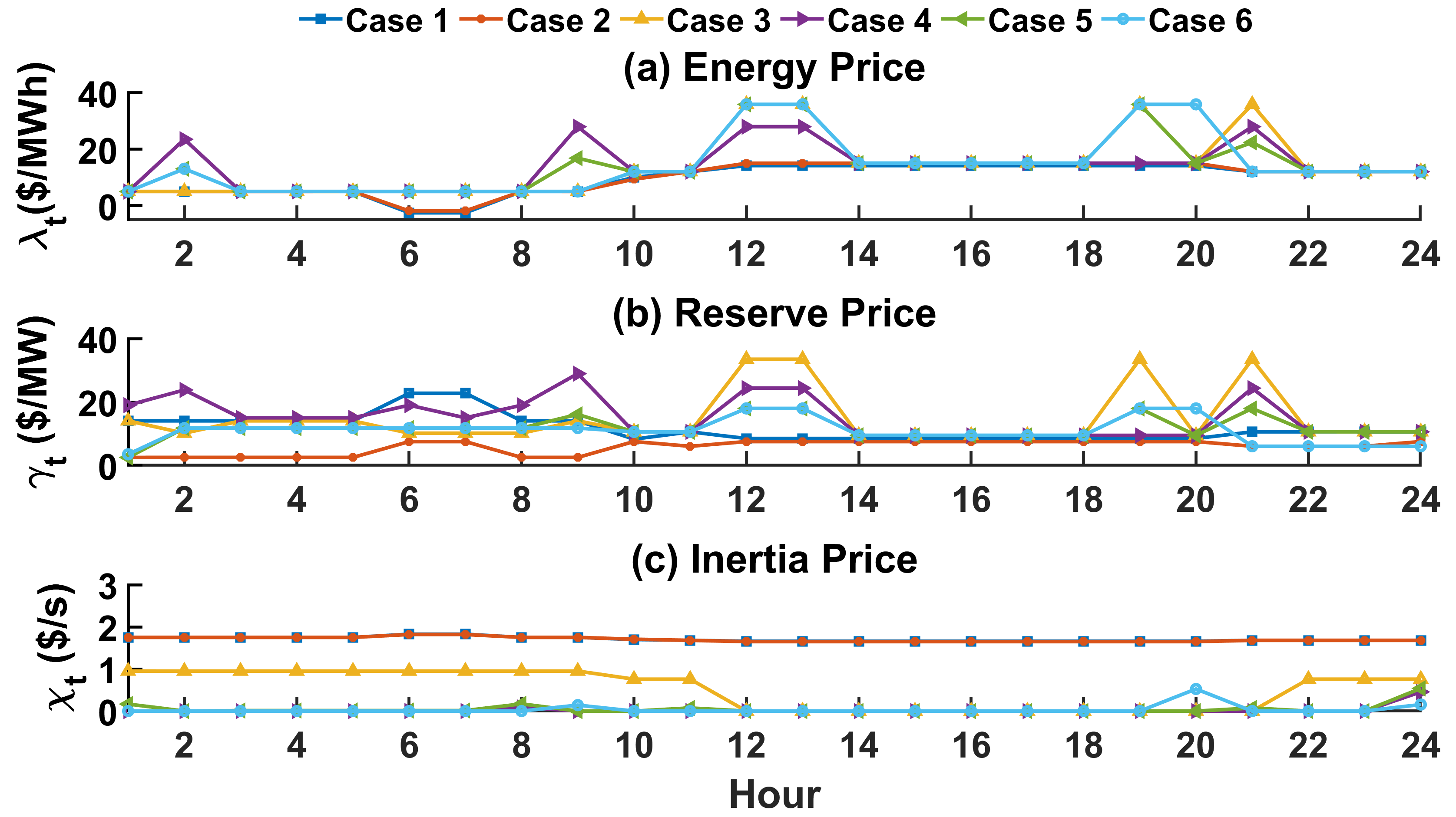}
    \caption{Prices for providing energy, reserve and inertia in Cases 1-6 of the illustrative example.}
    \label{fig:small case prices}
\end{figure}

\begin{figure}[!t]
    \centering
    \includegraphics[width=1\linewidth]{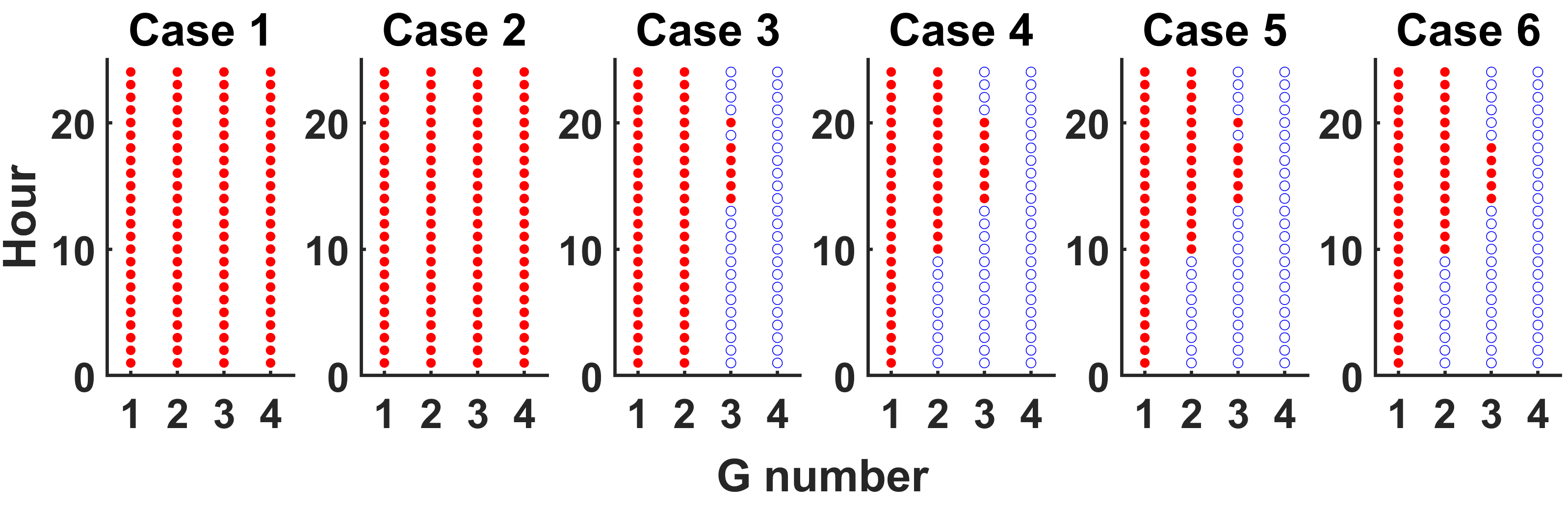}
    \caption{Unit commitment decisions in Cases 1-6 of the illustrative example, where solid red circles indicate committed generators ($u_{gi,t}=1$) and hollow blue circles indicate offline generators ($u_{gi,t}=0$).}
    \label{fig:small case UC}
\end{figure}

Fig.~\ref{fig:small case prices}(a) shows that the average energy prices in Cases 1 and 2 are lower than in Cases 3-6. This is because the former cases have more restrictive operational constraints where the inertia provision falls on generators, leading to the commitment of all generators and out-of-merit order dispatch. For example, in Cases 1-2 the capacity of the most expensive generator (G$_4$) is not necessary to meet the system peak load (\unit[28.14]{MW}, see Fig.~\ref{fig:small case load and wind}) because cheaper generators G$_1$, G$_2$ and G$_3$ can produce up to \unit[30]{MW}, but it is committed to meet inertia requirements. As a result, G$_1$, G$_2$ and G$_3$ with low marginal costs are under-loaded and set up the system energy price. Therefore, when ES and RES in Cases 3-6 provide virtual inertia, G$_4$ is not committed (see Fig.~\ref{fig:small case UC}), which allows for a dispatch point closer to the merit order. 
As a result of the out-of-merit order dispatch, Cases 1 and 2 also tend to produce lower reserve prices than Cases 3-6 (see Fig.~\ref{fig:small case prices}(b)). In other words, enabling ES and RES provide reserve and inertia leads to reduced commitments in Cases 3-6 (see Fig.~\ref{fig:small case UC}), which makes less conventional generation capacity available for reserve provision, leading to reserve price increases.  
However, as the number and flexibility of reserve and inertia providers increases in Cases 5 and 6,  the resulting reserve prices become closer to Cases 1 and 2. 
As expected, Cases 1 and 2 lead to the greatest inertia prices because in these cases only conventional generators are eligible to provide inertia, which requires the commitment of all four generators. 
As more resources become available to provide inertia and reserve in Cases 3-6, the inertia prices gradually decline (see Fig.~\ref{fig:small case prices}(c)). Notably, Cases 4-6 leads to lower inertia prices than Case 3 because in these cases the inertia provision from ES units is optimized rather than based on a fixed value.

In terms of the system-wide performance, introducing inertia and reserve provision from RES and ES reduce the total operating cost, see Table~\ref{tab:economic results}. This trend is observed uniformly from the most restrictive (Case 1) to the most relaxed (Case 6) instances. Meanwhile, the overall profit of all generators, RES and ES increases in Cases 3-6 relative to Cases 1-2 due to greater energy and reserve prices as discussed above.

Table~\ref{tab:economic results} itemizes the total cost, revenue and profit of all resources in each case. 
We observe that the objective value, i.e., the total system cost, decreases from Case 1 to Case 6. This is because the scheduling of the power system becomes more flexible when ES and the wind farm can provide reserve and inertia.  To be more specific, in Case 6, the reserve provided by ES account for 27.29 \% of the total reserve, while the virtual inertia provided by ES and wind farms account for 45.26 \% and 16.36 \% of the total inertia. Comparing with Case 1, where generators provided all the reserve and inertia, the total operating cost in Case 6 decreased by 56.64\%, which shows the significance of ES and wind farms participating the reserve and inertia market in a RES-rich system.

\begin{table}[t!]
  \centering
  \caption{Daily cost, revenue and profit in Cases 1-6 (in \$).}
  \resizebox{\linewidth}{!}{%
    \begin{tabular}{cccccccc}
    \toprule
    \multicolumn{2}{p{3.35em}}{Case} & 1     & 2     & 3     & 4     & 5     & 6 \\
    \midrule
    \multicolumn{2}{r}{Total cost} & 10835.1  & 10825.6  & 5324.4  & 4826.9  & 4775.8  & 4697.8  \\
    \midrule
    \multirow{3}[0]{*}{G} & Revenue & 5162.6  & 5213.4  & 7293.2  & 6145.4  & 6523.8  & 6500.0  \\
          & Cost  & 10720.0  & 10734.6  & 5130.8  & 4705.2  & 4603.1  & 4440.3  \\
          & Profit & -5557.3  & -5521.2  & 2162.3  & 1440.2  & 1920.8  & 2059.7  \\
    \midrule
    \multirow{3}[0]{*}{ES} & Revenue & 169.7  & 159.9  & 477.0  & 269.2  & 430.1  & 588.2  \\
          & Cost  & 115.1  & 91.1  & 193.6  & 121.7  & 172.8  & 257.4  \\
          & Profit & 54.5  & 68.8  & 283.5  & 147.4  & 257.3  & 330.8  \\
    \midrule
    W     & Revenue & 1260.3  & 1294.4  & 1917.1  & 2114.4  & 2028.4  & 1889.7  \\
    \bottomrule
    \end{tabular}
    }
  \label{tab:economic results}
\end{table}

\subsection{Numerical Experiments on the IEEE 118-Bus System}
\label{sec:numerical_experiments_large_system}

This section presents results obtained with the network-constrained extension in \cref{model_network} using the modified IEEE 118-bus system from \cite{zimmerman2010matpower} with added 11 wind farms and 11 ES units. It is assumed that ES and RES units are co-located at nodes: 3, 8, 11, 20, 24, 26, 31, 38, 43, 49, 53. 
The system-wide load and wind power profiles are shown in Fig.~\ref{fig:large case prices}(a) and distributed among 91 buses in the original system as described in \cite{zimmerman2010matpower}. We also set the minimum inertia constant requirement ($H_{\min}$) to a typical value of \unit[3.3]{s} \cite{fernandez2020review}.
This inertia requirement must then be met by ES, RES and 54 traditional generators in the system with $H_{gi}=\unit[3.5]{s} \ i=1,\cdots,10$ (hydro power generator), $H_{gi}=\unit[4]{s} \ i=11,\cdots,35$ (coal, oil or nuclear power generator), and $H_{gi}=\unit[5]{s} \ i=36,\cdots,54$ (gas power generator) \cite{fernandez2020review}. Finally, the inertia constant of each ES unit is constrained as $H_e \in [0, \unit[11]{s}]$.

The scope of this section is limited to Cases 1 and 6 as defined in Table~\ref{tab:cases} because it allows for comparing the status quo and the most advanced case where all available conventional, ES and RES resource compete for the provision of energy, reserve and inertia services. To be more specific, in Case 6, the reserve provided by ES account for 15.67 \% of the total reserve, while the virtual inertia provided by ES and wind farms account for 34.52 \% and 12.98 \% of the total inertia. Therefore, the total operating cost reduces by 34.9\% in Case 6 (\$18841.59) relative to Case 1 (\$28956.03).

Inspecting LMPs at the reference node (Bus 1) reveals that the outcomes are very similar in Cases 1 and 6 (see  Fig.~\ref{fig:large case prices}(b)). The modest difference in LMP between these cases during 12:00-20:00 is caused by the out-of-merit order dispatch in Case 1, which we have already discussed in the illustrative example in Section~\ref{sec:numerical_experiments_toy_system}.
Fig.~\ref{fig:LMP and its distribution}(a) and (b) demonstrate that in both Cases 1 and 6, LMPs at different nodes are the same most of the time, except for two periods. One period includes 01:00, 06:00 and 07:00 and the negative LMPs at nodes 1-33 (where most of the wind farms are concentrated) are caused by excessive wind power, which requires ES charging, see Fig.~\ref{fig:LMP and its distribution}(c). The other period is 12:00-20:00, and the LMP difference is caused by congestion on lines 7-9, 41, 119 and 152 due to high net load, see Fig.~\ref{fig:LMP and its distribution}(d). 

On the other hand, reserve prices reported in Fig.~\ref{fig:large case prices}(b) are systematically greater for Case 1 relative to Case 6 because of fewer available resources providing reserve. For example, Fig.~\ref{fig:large case reserve}(a) and (b) compare the reserve provision in Cases 1 and 6, from which it follows that in Case 6 reserve is provided exclusively by conventional generators, while in Case 6 reserve burden is provide by ES units, which have lower marginal costs. 

Fig.~\ref{fig:large case prices}(d) shows that inertia prices are zero at all times in Case 1, which indicates that the committed generators in Case 1 involuntarily provide more inertia than the least inertia requirement ($H_{\min}$) and this constraint is never binding. Unlike Case 1, Case 6 leads to non-zero inertia prices at time periods 01:00, 08:00 and 10:00. After inspecting binding constraints which set the inertia price, it is observed that in all instances ES units are the marginal inertia providers.

The prices reported above make it possible to analyze the daily revenues of generators, RES and ES from providing  energy, reserve and inertia services. Fig.~\ref{fig:large case revenue} compares these revenues for Case 1 and Case 6. In both cases, conventional generators collect their revenue mostly from energy services. However, the total revenue of ES units shifts from energy arbitrage in Case 1 to a combination of energy, reserve and inertia services in Case 6. More specifically, Fig.~\ref{fig:large case revenue} demonstrates that the ES revenue from reserve and inertia services is collected when these resources set respective prices. The revenue of RES is not discussed here since RES is a price taker in this market, i.e. the marginal resources of energy and inertia are always traditional generators or ES.

\begin{figure}[!t]
    \centering
    \includegraphics[width=1\linewidth]{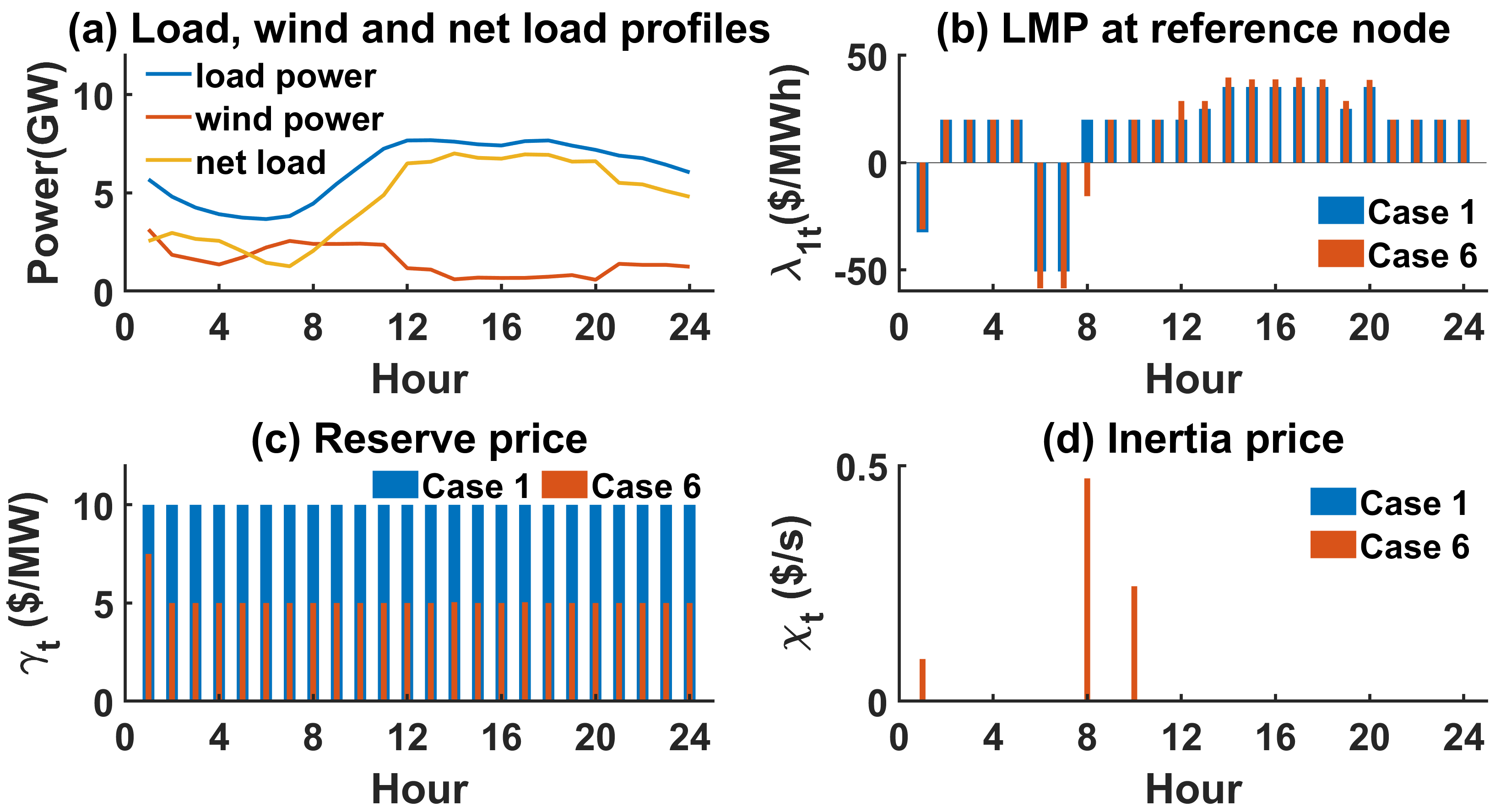}
    \caption{Original load and wind power data and resulting prices in Cases 1 and 6 for the 118-bus IEEE system}
    \label{fig:large case prices}
\end{figure}

\begin{figure}[!t]
    \centering
    \includegraphics[width=1\linewidth]{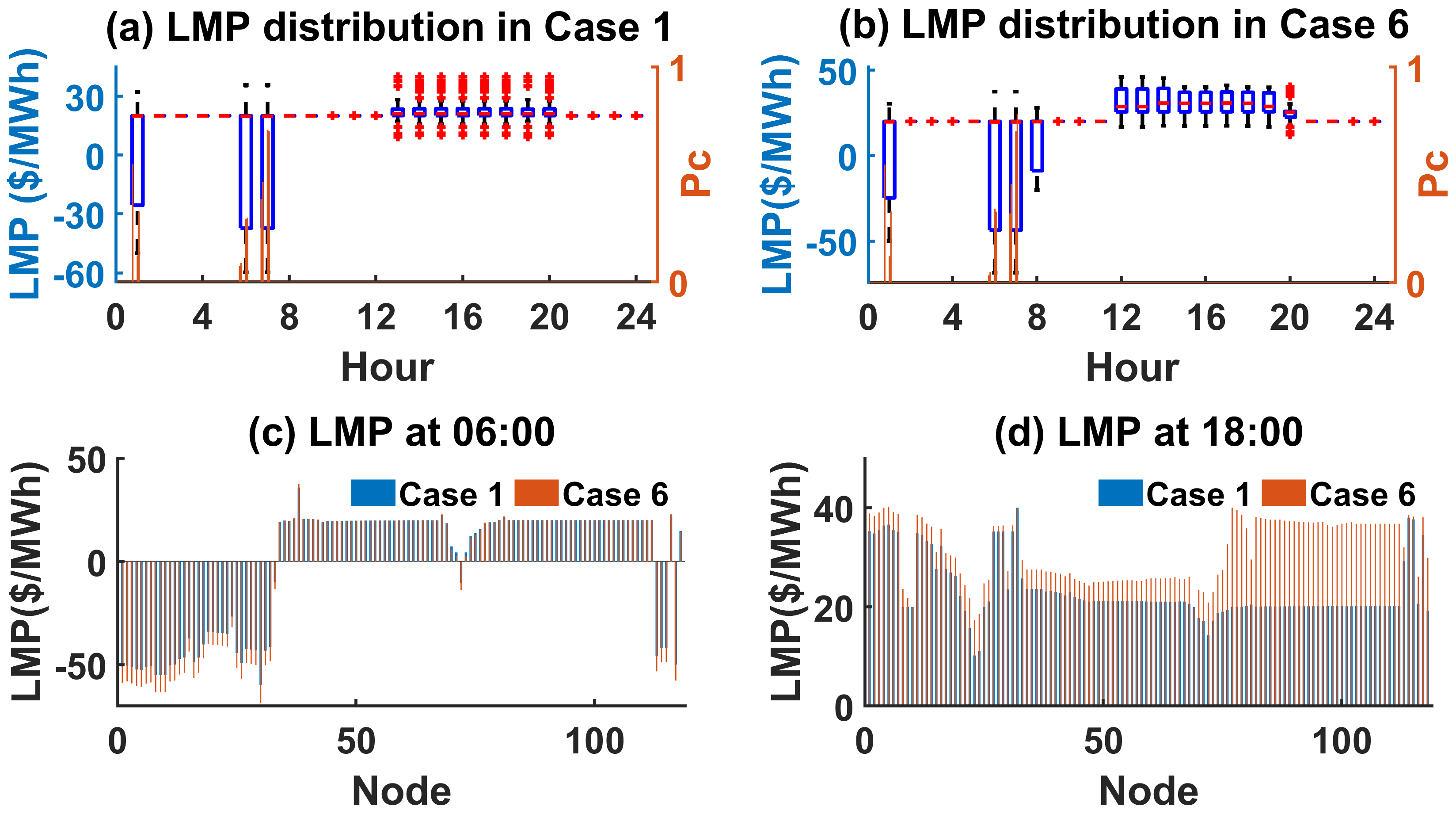}
    \caption{LMPs at selected time periods and their distributions at 118 nodes.}
    \label{fig:LMP and its distribution}
\end{figure}

\begin{figure}[!t]
    \centering
    \includegraphics[width=1\linewidth]{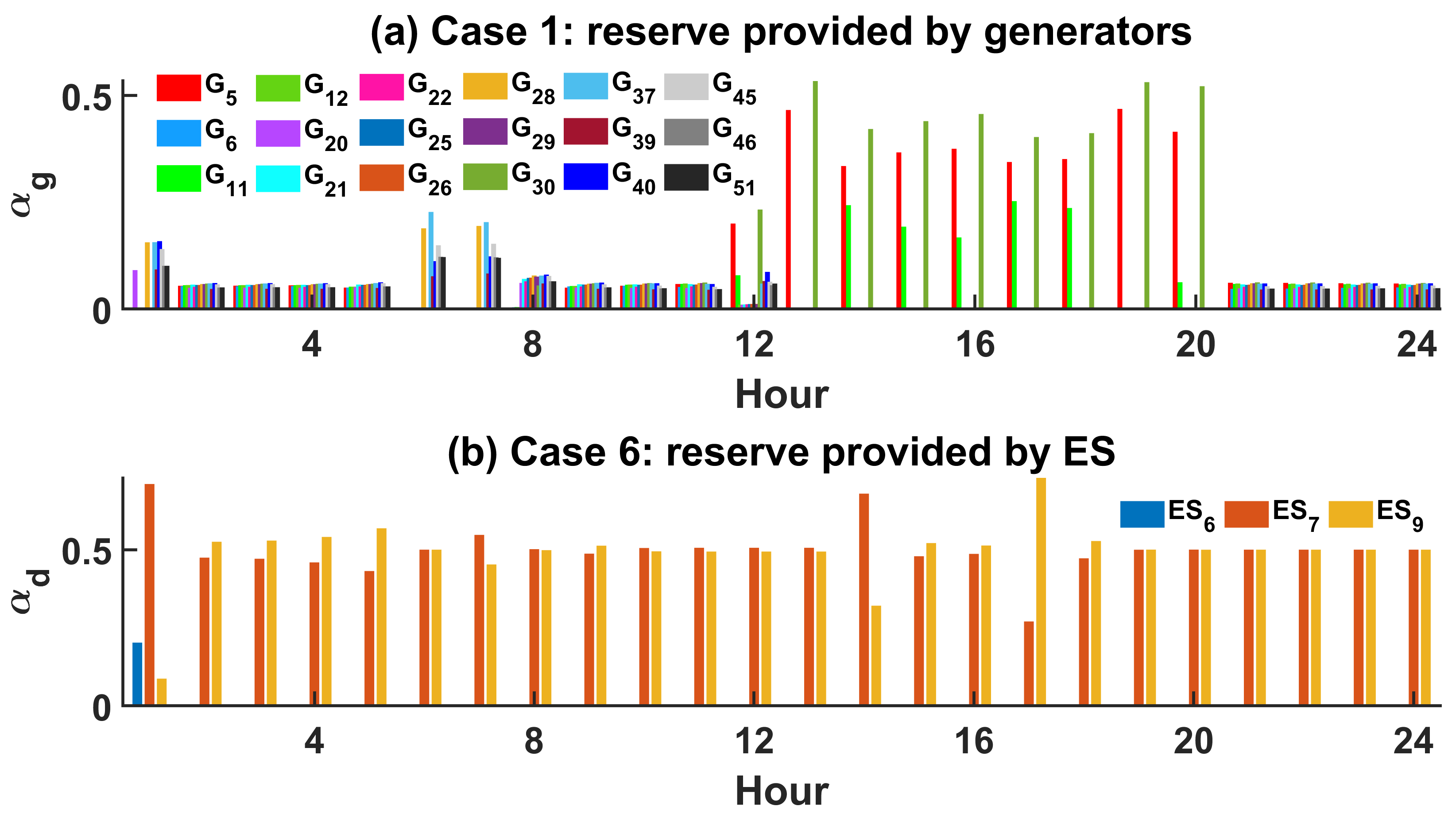}
    \caption{Reserve provision decisions in Cases 1 and 6 for the 118-bus IEEE system}
    \label{fig:large case reserve}
\end{figure}

\begin{figure}[!t]
    \centering
    \includegraphics[width=1\linewidth]{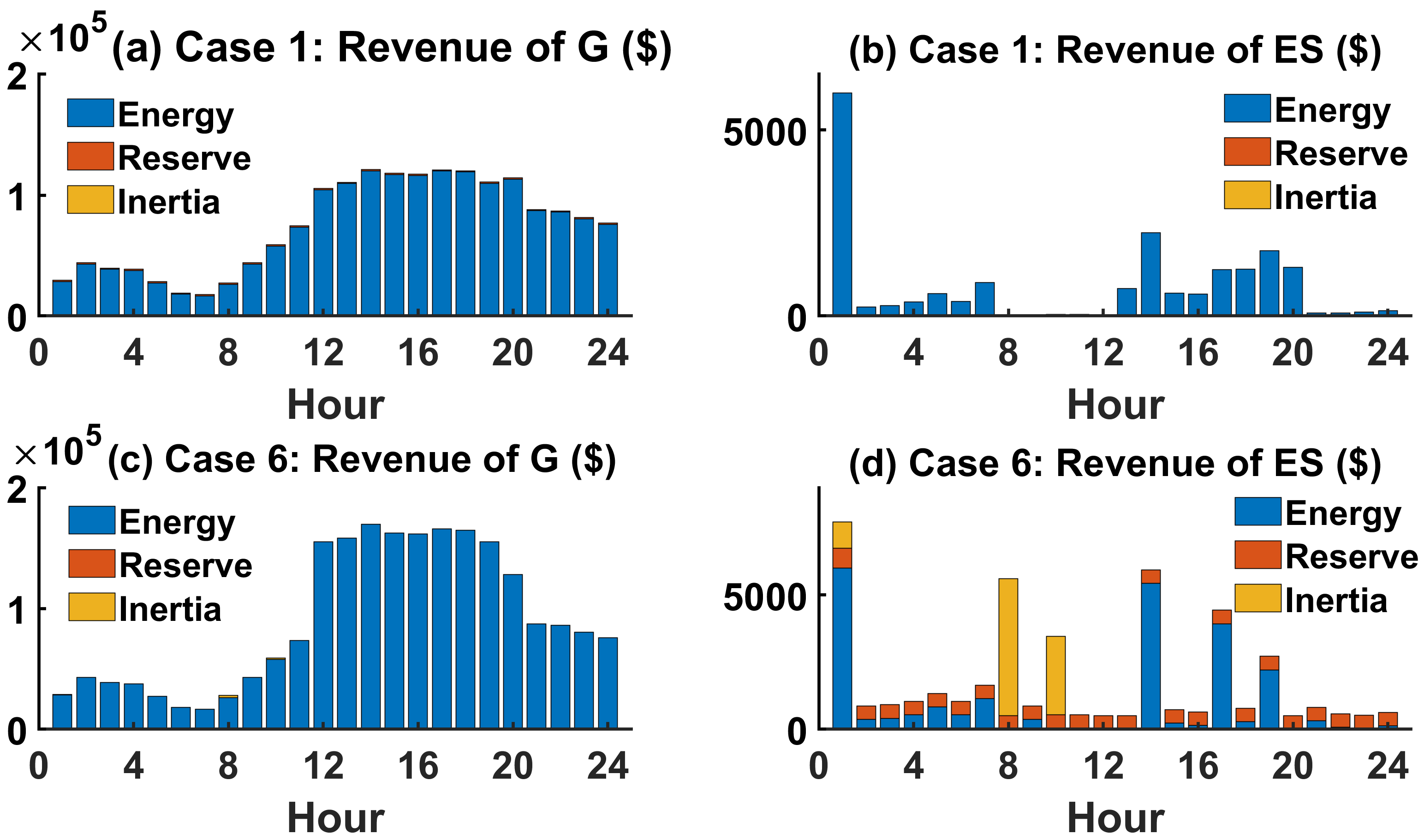}
    \caption{Daily revenue from providing energy, reserve and inertia  for  the 118-bus IEEE system.}
    \label{fig:large case revenue}
\end{figure}

\subsection{Simulation Environment}
All simulations were carried out in Julia v1.5. The MIQP and QP problems were solved using the Gurobi \cite{gurobi} and Ipopt \cite{Wachter2006On} solvers. 
All experiments were performed on a standard PC workstation with an Intel i9 processor and 16 GB RAM. The solving time for each instance in Section~\ref{sec:numerical_experiments_toy_system} was less than 10 seconds, while every instance in Section~\ref{sec:numerical_experiments_large_system} was solved in less than 1 minute. Since the quadratic objective of our model is convex, all problems were solved to global optimality with a duality gap of $<0.01\%$.

\section{Conclusion}
\label{sec:conclusion}

This paper designs a stochastic electricity market to price energy, reserve and inertia provision in renewable-rich power systems. We prove that the resulting market outcomes are efficient and constitute a competitive equilibrium, i.e., they clear the market, minimize the cost and provide no incentive for market participants to deviate from the market outcomes. Numerical experiments are carried out for a 4-generator illustrative system and a modified IEEE 118-bus system, and the results demonstrate the effect of different synchronous and non-synchronous inertia technologies on the dispatch decisions and the resulting energy, reserve and inertia prices. Possible future works include comparing the MIQP relaxation via fixed binaries with alternative methods. This may include ensuring long-term generator profits as discussed in \cite{byers2021long}.

\bibliographystyle{IEEEtran}
\bibliography{literature}

\end{document}